\DeclareMathAlphabet{\mathbfi}{OML}{cmm}{b}{it}
\let\originalleft\left
\let\originalright\right
\renewcommand{\left}{\mathopen{}\mathclose\bgroup\originalleft}
\renewcommand{\right}{\aftergroup\egroup\originalright}
\newenvironment{equations}[1][]{\subequations\ifx\relax#1\relax\else\label{#1}\fi\align\ignorespaces}{\endalign\ignorespacesafterend\endsubequations}
\def\@spliteq#1{\begin{equation}\begin{split}#1\end{split}\end{equation}}
\def\@spliteqstar#1{\begin{equation*}\begin{split}#1\end{split}\end{equation*}}
\def\splitequation{\collect@body\@spliteq}
\def\csname splitequation*\endcsname{\collect@body\@spliteqstar}
\def\csname endsplitequation*\endcsname{\ignorespacesafterend}
\renewcommand{\vec}[1]{{\ifnum9<1#1\mathbf{#1}\else\ifcat\noexpand#1\relax\boldsymbol{#1}\else\mathbfi{#1}\fi\fi}}
\newcommand{\mathe}{\mathrm{e}}
\newcommand{\mathi}{\mathrm{i}}
\newcommand{\total}{\mathop{}\!\mathrm{d}}
\newcommand{\eqend}[1]{\,#1}
\newcommand{\defeq}{\coloneq}
\newcommand{\abs}[1]{{\left\lvert{#1}\right\rvert}}
\newcommand{\norm}[1]{{\left\lVert{#1}\right\rVert}}
\newcommand{\sgn}{\operatorname{sgn}}
\newcommand{\1}{\mathbbm{1}}
\newtheorem{theorem}{\textsc{Theorem}}[section]
\newtheorem{lemma}[theorem]{\textsc{Lemma}}
\newtheorem{proposition}[theorem]{\textsc{Proposition}}
\newtheorem{definition}[theorem]{\textsc{Definition}}
\journal{Journal of Geometry and Physics}
\begin{document}

\title{A proposal for the algebra of a novel noncommutative spacetime}

\author[1,2]{Markus B. Fr{\"o}b}
\ead{markus.froeb@fau.de}
\author[1]{Albert Much}
\ead{much@itp.uni-leipzig.de}
\author[3]{Kyriakos Papadopoulos\corref{cor1}}
\ead{kyriakos.papadopoulos@ku.edu.kw}

\affiliation[1]{organization={Institut f{\"u}r Theoretische Physik, Universit{\"a}t Leipzig}, addressline={Br{\"u}derstra{\ss}e 16}, postcode={04103}, city={Leipzig}, country={Germany}}
\affiliation[2]{organization={Department Mathematik, Friedrich--Alexander--Universit{\"a}t Erlangen--Nuremberg}, addressline={Cauerstraße~11}, postcode={91058}, city={Erlangen}, country={Germany}}
\affiliation[3]{organization={Department of Mathematics, College of Science, Kuwait University}, addressline={Sabah Al Salem University City, P.O.~Box 5969}, postcode={Safat 13060}, city={Shadadiya}, country={Kuwait}}

\cortext[cor1]{Corresponding author}

\begin{abstract}
We investigate the quantum structure of spacetime at fundamental scales via a novel, Lorentz-invariant noncommutative coordinate framework. Building on insights from noncommutative geometry, spectral theory, and algebraic quantum field theory, we systematically construct a quantum spacetime algebra whose geometric and causal properties are derived from first principles. Using the Weyl algebra formalism and the Gelfand--Naimark--Segal (GNS) construction, we rigorously define operator-valued coordinates that respect Lorentz symmetry and encode quantum gravitational effects through nontrivial commutation relations. We show how the emergent quantum spacetime exhibits minimal length effects, which deliver both classical Minkowski distances and quantum corrections proportional to the Planck length squared. Furthermore, we establish that noncommutativity respects a fuzzy form of causality, where the quantum causal structure gives back the light cone in the classical limit, vanishing for spacelike separations and encoding a time orientation for timelike intervals.
\end{abstract}

\begin{keyword}
Noncommutative spacetime; Weyl algebra; Lorentz symmetry
\end{keyword}

\maketitle

\clearpage
\tableofcontents

\section{Introduction}
\label{sec:intro}

The fundamental structure of spacetime at quantum scales remains a profound open question in theoretical physics. Recent advances in noncommutative geometry have revealed deep connections between spectral geometry~\cite{connes1994} and the emergence of Lorentzian structure from more fundamental algebraic frameworks, see for example Refs.~\cite{LNCG1, LNCG2, LNCG3} and references therein. In this work, we investigate a novel approach to quantizing coordinates that preserves essential geometric and causal properties while incorporating quantum gravitational effects. For a historic overview of the subject with many references, we refer the reader to the work of Maresca~\cite{maresca2025}, and for a collection of recent works to Refs.~\cite{NC1,NC2,NC3,NC4,NC5,NC6,NC7,NC8}.

An important approach to noncommutative geometry is the spectral triple framework of Connes, which provides a unified framework where geometric information is encoded in purely algebraic and analytical structures~\cite{connes1994}. This framework has found remarkable applications in particle physics~\cite{connesmarcolli2008}, where the noncommutative standard model emerges naturally from an almost-commutative spectral triple that combines continuous spacetime with a discrete internal structure. A spectral triple $(\mathcal{A}, \mathcal{H}, \mathcal{D})$ consists of an involutive algebra $\mathcal{A}$ acting on a Hilbert space $\mathcal{H}$, together with a self-adjoint operator $\mathcal{D}$ satisfying some technical assumptions. The simplest examples of spectal triples are the ones canonically associated to smooth compact Riemannian spin manifolds $\mathcal{M}$, where $\mathcal{A}$ is the algebra of Lipschitz maps $f \colon \mathcal{M} \to \mathbb{C}$ with the supremum norm $\norm{ \cdot }_\infty$, $\mathcal{H} = L^2(\mathcal{M},S)$ is the space of square integrable sections of the spinor bundle $S$ over $\mathcal{M}$, and $\mathcal{D}$ is the Dirac operator associated with the Levi--Civita connection on $\mathcal{M}$. $\mathcal{A}$ acts on $\mathcal{H}$ by pointwise multiplication, and the pure states on $\mathcal{A}$ are the evaluation functionals $p\colon f \mapsto f(p)$. It is then easy to show (see e.g., Refs.~\cite{connes1994,ConnesDistance,graciabondiavarillyfigueroa2001,franco2010}) that the Riemannian distance between $p$ and $q$ can be obtained as the supremum
\begin{equation}
\label{eq:spectral_triple_distance}
d(p,q) = \sup_{f \in \mathcal{A}\colon \norm{ [ \mathcal{D}, f ] \leq 1 }} \abs{ p(f) - q(f) } \eqend{.}
\end{equation}
This formula extends to all states on $\mathcal{A}$, and one may employ it to define a distance also in the case where $\mathcal{A}$ is non-commutative.\footnote{In fact, this distance is nothing else but the Wasserstein distance of order 1 between $p$ and $q$, in the dual formulation of Kantorovich~\cite{rieffel1999,martinetti2018}. Alternatively, one can also write $d(p,q)$ as an infimum~\cite{dandreamartinetti2021}.}

However, the extension to Lorentzian geometry presents fundamental challenges. In the Riemannian case, it has been proven (through the Gel'fand--Naimark correspondence~\cite{GelfandNaimark}) that $\mathcal{M}$ is homeomorphic to the space of (unitary equivalence classes of) irreducible representations of the completion of the algebra $\mathcal{A}$ of smooth functions on $\mathcal{M}$, equipped with the topology of pointwise convergence (the weak $*$ topology), and the points of $\mathcal{M}$ arise as pure states on $\mathcal{A}$. On the other hand, the Lorentzian case is significantly more subtle due to the indefinite signature of the metric. The pioneering work of Franco, Eckstein, and others established a spectral formulation of the Lorentzian distance formula~\cite{LNCG1,LNCG2,LNCG3,LNCG4,LNCG5,LNCG6,LNCG7,LNCG8,LNCG9}, which relies on Krein space structures. A related issue is the relation between the manifold topology on a Lorentzian spacetime and the Lorentzian distance and causal relations between points, which again is subtle due to the indefinite signature of the metric; see Minguzzi's review~\cite{MinguzziTopology} for a comprehensive analysis and conditions under which one can resolve this relation.

Parallel to these geometric developments, the quest for Lorentz-invariant noncommutative coordinates has been actively pursued. A widely studied model of noncommutative coordinates is the Doplicher--Fredenhagen--Roberts (DFR) algebra~\cite{DFR1995}, where the coordinate operators $\hat{x}^\mu$ have the non-trivial commutator
\begin{equation}
\label{eq:dfr}
[ \hat{x}^\mu, \hat{x}^\nu ] = \mathi \Theta^{\mu\nu}
\end{equation}
with a constant skew-symmetric matrix $\Theta^{\mu\nu}$, whose entries are of the order of the (squared) Planck length. However, a constant matrix is compatible with Lorentz covariance only in two spacetime dimensions or if one considers a twisted Poincar\'e symmetry~\cite{chaichiankulishnishijimatureanu2004}. Constructing explicit realizations that maintain manifest Lorentz invariance while avoiding pathological behaviors remains challenging.

Building upon our previous investigations of Lorentz-invariant noncommutative coordinates~\cite{froebmuchpapadopoulos2023a,froebmuchpapadopoulos2023b,froebmuchpapadopoulos2023c}, this paper addresses a fundamental question: given a set of such coordinates with specified Lorentz-invariant commutation relations, how should we properly quantize this system while respecting both the algebraic structure and underlying geometric properties? Put differently, how can we construct a quantum spacetime out of the algebra? And moreover, what does a quantum spacetime mean? Drawing inspiration from the Ehlers--Pirani--Schild framework~\cite{ehlerspiranischild1972}, one expects that a suitable algebraic construction should yield a distance functional and a causal structure, ensuring that both geometric and causal properties are intrinsically incorporated. Moreover, the distance functional should yield the classical (Lorentzian) spacetime distance as its leading contribution, with additional terms naturally interpretable as quantum corrections.

The central challenge lies in reconciling several competing requirements. First, while Riemannian manifolds can be reconstructed from their algebra of functions, the Lorentzian case presents fundamental topological obstructions~\cite{MinguzziTopology}. As noted by Minguzzi and others, standard manifold topology ``ignores the Lorentzian structure'' and fails to capture essential causal relationships. The quantization process must thus preserve the indefinite Lorentzian signature, while at the same time construct a Hilbert space with positive-definite metric. Our approach centers on the recognition that proper quantization of Lorentzian noncommutative coordinates requires auxiliary structures to define the underlying Hilbert space, but crucially, these can be constructed out of the indefinite inner product that incorporates the Lorentzian geometry. This resolves the apparent tension between the need for positive-definite structures in quantum mechanics and the indefinite nature of Lorentzian geometry.

The remainder of this work is organized as follows: in section~\ref{sec:quantum_nc} we recall the construction of Lorentz-invariant noncommutative coordinates in our earlier work~\cite{froebmuchpapadopoulos2023a,froebmuchpapadopoulos2023b,froebmuchpapadopoulos2023c}. Paralleling the construction of free bosonic quantum fields, we then define a corresponding Weyl algebra in subsection~\ref{subsec:weyl}, and show that coordinate operators can be obtained as limits in a suitable topology. To obtain a positive state on this algebra and thus a Hilbert space, in subsection~\ref{subsec:krein} we recall the Krein space construction, and in subsection~\ref{subsec:dm_state} we define the analog of the Dereziński--Meissner state. In section~\ref{sec:nc_spacetime}, we then show how a distance functional (in subsection~\ref{subsec:distance}) and the causal structure (in subsection~\ref{subsec:causal}) can be extracted from the quantum theory. In particular, we show how both a minimal distance and fuzzy causality naturally emerge from our proposal, without the need for ad-hoc postulates. We conclude in section~\ref{sec:discussion}.

\section{Quantum theory of noncommutative coordinates}
\label{sec:quantum_nc}

Let us recall how the noncommutative coordinates were derived in our earlier work~\cite{froebmuchpapadopoulos2023a,froebmuchpapadopoulos2023b} in the context of perturbative quantum gravity, seen as an effective field theory of quantum gravity~\cite{burgess2003}. We started by considering Minkowski spacetime with Cartesian coordinates $x^\mu = (t, \vec{x})^\mu$, $x^2 = - t^2 + \vec{x}^2$, and introduced metric perturbations $h_{\mu\nu}$ around the Minkowski background, decomposing the full metric into $g_{\mu\nu} = \eta_{\mu\nu} + \kappa h_{\mu\nu}$ with perturbative parameter
\begin{equation}
\label{eq:kappa_def}
\kappa \defeq \sqrt{ \frac{16 \pi \hbar G_\text{N}}{c^4} } \sim \ell_\mathrm{Pl}
\end{equation}
proportional to the Planck length $\ell_\mathrm{Pl}$. Linearized diffeomorphisms generated by a vector field $\xi^\mu$ act as symmetry transformations on $h_{\mu\nu}$ according to $\delta_\xi h_{\mu\nu} = \partial_\mu \xi_\nu + \partial_\nu \xi_\mu$, and observables invariant under this symmetry can be constructed using the relational approach (see, e.g., Refs.~\cite{giddingsmarolfhartle2006,gieselhofmannthiemannwinkler2010,tambornino2012,gieselthiemann2015,khavkine2015,goellerhoehnkirklin2022,giddings2025,ferrerothiemann2025} and references therein). In practice, one has to construct a dynamical coordinate system $X^{(\mu)}(x)$ transforming as $\delta_\xi X^{(\mu)} = \xi^\rho \partial_\rho X^{(\mu)}$ and which in the classical limit $\hbar \to 0$, hence $\kappa \to 0$ reduces to the background coordinates $x^\mu$, and evaluate the quantity of interest in this system, see for example~\cite{brunettifredenhagenrejzner2016,froeblima2018,baldazzifallsferrero2022,froeblima2023}. In the linearized theory, we obtained the $X^{(\mu)}$ as local causal functionals of the metric perturbation, and under quantization they inherit non-trivial commutators from the quantized metric perturbation $h_{\mu\nu}$. As a result, at leading order we obtain the non-vanishing commutator
\begin{equation}
\label{eq:coord_commutator}
\left[ X^{(\mu)}(x), X^{(\nu)}(x') \right] = - \mathi \kappa^2 \frac{\eta^{\mu\nu}}{8 \pi} \sgn(t-t') \Theta\left[ -(x-x')^2 \right] \1
\end{equation}
for the field-dependent coordinates, where $\Theta\left[ -(x-x')^2 \right]$ is equal to $1$ for timelike separations and vanishes for spacelike ones.

This result shows that noncommutative geometry emerges naturally from the quantum nature of gravity, without the need for \emph{ad hoc} assumptions. The induced noncommutativity is fully Lorentz-invariant and compatible with microcausality, being zero for spacelike separations and constant within the light cone. The scale of noncommutativity is set by the squared Planck length $\ell_\mathrm{Pl}^2$, which arises from the underlying quantum structure of gravity. Higher-order corrections to the commutator~\eqref{eq:coord_commutator} would arise when considering perturbative corrections beyond linear order. These contributions become relevant as one approaches scales closer to the Planck length or considers strong-field regimes, potentially modifying the functional form and magnitude of the noncommutativity. However, within the effective field theory framework and for physical situations far from such extreme regimes, the leading linear order provides a reliable and calculable description. For clarity and technical tractability, we thus restrict the present analysis to linear order, postponing the study of higher-order effects to future work where their resummation and physical significance can be addressed in more detail. An intriguing and deceptively simple question arises at this point, one whose investigation opens up a broad landscape of possibilities: Given a noncommutative algebra of the form~\eqref{eq:coord_commutator}, by what procedure can we reconstruct the corresponding noncommutative spacetime? Moreover, which structural features or data of this space are essential for a meaningful and physically relevant formulation of its geometry and causal properties? That is, if we forget how the commutator~\eqref{eq:coord_commutator} was derived, is is possible to reconstruct the noncommutative Minkowski spacetime from a corresponding noncommutative algebra?

\subsection{The Weyl algebra}
\label{subsec:weyl}

A rigorous approach inspired by algebraic quantum field theory is to treat quantization at the level of the algebra $\mathcal{A}$, and then systematically lift the abstract algebraic structure to a concrete representation of $\mathcal{A}$ on a Hilbert space $\mathcal{H}$. The standard procedure to accomplish this transition is the Gelfand--Naimark--Segal (GNS) construction for $C^*$ algebras, which provides a canonical way to obtain a cyclic $\star$-representation of $\mathcal{A}$ as bounded operators acting on a Hilbert space, given a state $\omega\colon \mathcal{A} \to \mathbb{C}$.\footnote{The GNS construction can also be applied more generally to unital $\star$-algebras $\mathfrak{A}$ equipped with a positive linear functional $\omega\colon \mathfrak{A} \to \mathbb{C}$, which gives a representation in terms of not necessarily bounded operators, see for example \cite[Chapter 4.4]{schmuedgen2020} and~\cite{khavkinemoretti2015}.} Since the commutator~\eqref{eq:coord_commutator} is proportional to the identity operator, any irreducible representation of the $X^{(\mu)}$ themselves results in unbounded operators\footnote{The proof is analogous to the one for the canonical commutation relations.}, and to obtain a $C^*$ algebra one has to consider the corresponding Weyl operators $W(f)$. These are indexed by a collection of real, smooth, compactly supported functions $f_{(\mu)} \in C_0^\infty(\mathbb{R}^4)$, $\mu \in \{0,\ldots,3\}$, which form a real linear space $H$. The pair $(H,\sigma)$ with the symplectic bilinear form
\begin{equation}
\label{eq:sigma_def}
\sigma(f,g) \defeq - \frac{\kappa^2}{8 \pi} \iint \eta^{\mu\nu} f_{(\mu)}(x) g_{(\nu)}(x') \sgn(t-t') \Theta\left[ -(x-x')^2 \right] \total^4 x \total^4 x' = - \sigma(g,f)
\end{equation}
is a symplectic space, from which the Weyl algebra $\mathrm{CCR}(H,\sigma)$ is constructed as follows:
\begin{definition}[Weyl Algebra]
\label{def:weylalgebra}
The Weyl algebra $\mathrm{CCR}(H,\sigma)$ over $(H,\sigma)$ is the $C^*$ algebra generated by elements $W(f)$ with $f \in H$, with the involution given by $[ W(f) ]^* \defeq W(-f)$ and the product by $W(f) W(g) \defeq W(f+g) \exp\left[ - \frac{\mathi}{2} \sigma(f,g) \right]$ (the Weyl relations).
\end{definition}
Since $\sigma(0,g) = 0$, the Weyl algebra is unital with unit given by $\1 \defeq W(0)$, which in particular implies that $\norm{ W(f) }^2 = \norm{ [ W(f) ]^* W(f) } = \norm{ W(-f) W(f) } = \norm{ \1 } = 1$ with the $C^*$ norm $\norm{ \cdot }$.

As it is well known, the topology induced by the $C^*$ norm (the operator norm topology) is a very fine one since we have
\begin{lemma}
For any $f \neq 0$ for which $\sigma(f,g) \neq 0$ for at least one $g \in H$, we have $\norm{ W(f) - \1 } = 2$.
\end{lemma}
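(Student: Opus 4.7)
The plan is to sandwich $\norm{W(f) - \1}$ between two bounds of $2$. The upper bound is nearly automatic: from Definition~\ref{def:weylalgebra}, one has $[W(f)]^* W(f) = W(-f) W(f) = W(0) \exp\!\left[-\frac{\mathi}{2}\sigma(-f,f)\right] = \1$ since $\sigma$ is alternating, and similarly $W(f)[W(f)]^* = \1$, so $W(f)$ is unitary with $\norm{W(f)} = 1$. Then by the triangle inequality $\norm{W(f) - \1} \leq \norm{W(f)} + \norm{\1} = 2$. Crucially, since $W(f)$ is unitary it is normal, and so $W(f) - \1$ is also normal; hence its operator norm equals its spectral radius, and the matching lower bound will follow if I can show that the spectrum of $W(f) - \1$ contains $-2$, equivalently that $-1 \in \operatorname{sp}(W(f))$.

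The central algebraic step is the conjugation identity
\begin{equation*}
W(g)\,W(f)\,[W(g)]^* = \mathe^{\mathi \sigma(f,g)}\,W(f) \eqend{,}
\end{equation*}
which is a straightforward consequence of the Weyl relations: two applications of the product rule from Definition~\ref{def:weylalgebra} give $W(g) W(f) W(-g) = W(f) \exp\!\left[-\frac{\mathi}{2}\sigma(g,f) + \frac{\mathi}{2}\sigma(f,g)\right]$, and antisymmetry of $\sigma$ collapses the exponent to $\mathi\sigma(f,g)$.

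The spectral consequence is the heart of the argument. Since conjugation by a unitary preserves the spectrum, the identity above forces $\operatorname{sp}(W(f))$ to be invariant under multiplication by $\mathe^{\mathi \sigma(f,g)}$ for every $g \in H$. By hypothesis there exists $g_0$ with $\sigma(f,g_0) \neq 0$, and since $\sigma(f,\lambda g_0) = \lambda\,\sigma(f,g_0)$ for $\lambda \in \mathbb{R}$, the set of rotation angles $\{\sigma(f,g) : g \in H\}$ is all of $\mathbb{R}$. Thus $\operatorname{sp}(W(f))$ is invariant under the full rotation group acting on $S^1$; being a nonempty closed subset of the unit circle (nonempty because $\mathrm{CCR}(H,\sigma)$ is unital), it must equal all of $S^1$. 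In particular $-1 \in \operatorname{sp}(W(f))$, so $-2 \in \operatorname{sp}(W(f) - \1)$, and the spectral radius, hence the norm, of the normal element $W(f) - \1$ is exactly $2$.

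I expect the only delicate point to be verifying normality of $W(f) - \1$ (so that the norm-equals-spectral-radius step applies) and confirming that the rotation-invariance forces the spectrum to be the full circle; both are clean but must be stated carefully. No quantitative estimate on $\sigma$ is needed beyond the hypothesis that it is nondegenerate on $f$, which is precisely what makes the rotation subgroup act transitively.
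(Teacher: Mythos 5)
Your proof is correct and follows essentially the same route as the paper: establish the upper bound by the triangle inequality, use the conjugation identity $W(g)W(f)[W(g)]^* = \mathe^{\mathi\sigma(f,g)}W(f)$ together with the hypothesis to show $\operatorname{spec}W(f)$ is rotation-invariant and hence the full circle, and conclude that $\norm{W(f)-\1}\geq 2$. The only cosmetic difference is that you invoke normality of $W(f)-\1$ to equate norm with spectral radius, whereas the paper simply uses the universal inclusion $\operatorname{spec}(a)\subseteq\{\lambda:\abs{\lambda}\leq\norm{a}\}$, which already gives the lower bound without needing normality; since you prove the upper bound separately anyway, the normality step is harmless but redundant.
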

\begin{proof}
By the triangle inequality we have $\norm{ W(f) - \1 } \leq \norm{ W(f) } + \norm{ \1 } = 2$, so it only remains to prove the converse inequality. Recall that the spectrum of an element $a \in \mathcal{A}$ in a unital $C^*$ algebra $\mathcal{A}$ is given by $\operatorname{spec} a \defeq \{ \lambda \in \mathbb{C} \colon a - \lambda \1 \ \text{is not invertible} \}$. If $\abs{\lambda} > \norm{ a }$, then the series
\begin{equation}
- \sum_{n=0}^\infty \lambda^{-n-1} a^n
\end{equation}
is convergent in norm and sums to $( a - \lambda \1 )^{-1}$, hence it follows that $\operatorname{spec} a \subseteq \{ \lambda \in \mathbb{C} \colon \abs{\lambda} \leq \norm{a} \}$. If $a$ is invertible, $0 \neq \operatorname{spec} a$ and it follows straightforwardly that $\operatorname{spec} a^{-1} = \{ \lambda \in \mathbb{C} \colon \lambda^{-1} \in \operatorname{spec} a \}$. Since $\norm{ W(f) } = \norm{ W(-f) } = 1$ and $[ W(f) ]^{-1} = W(-f)$, we have $\operatorname{spec} W(f) \subseteq \mathbb{S}^1 \defeq \{ \mathe^{\mathi \lambda} \colon \lambda \in [0,2\pi) \}$, i.e., the Weyl operators are unitary. Using the Weyl relations, we obtain
\begin{equation}
W(g) W(f) [ W(g) ]^* = W(f) \exp\left[ \mathi \sigma(f,g) \right] \eqend{,}
\end{equation}
and since for invertible $b \in \mathcal{A}$ we have $\operatorname{spec} a = \operatorname{spec} ( b a b^{-1} )$, it follows that the spectrum of $W(f)$ is invariant under arbitrary rotations whenever $\sigma(f,g) \neq 0$ for one $g \in H$ (and hence all $\lambda g$ with $\lambda \in (0,\infty)$), such that we have equality $\operatorname{spec} W(f) = \mathbb{S}^1$. It follows that $\operatorname{spec} ( W(f) - \1 ) = \{ \mathe^{\mathi \lambda} - 1 \colon \lambda \in [0,2\pi) \} \subseteq \{ \lambda \in \mathbb{C} \colon \abs{\lambda} \leq \norm{ W(f) - \1 } \}$, and thus $\norm{ W(f) - \1 } \geq 2$.
\end{proof}

It follows analogously that also $\norm{ W(f) - W(g) } = 2$ for all $f \neq g$, such that the operator norm topology on the Weyl algebra is (uniformly) discrete. This topology is thus unsuitable for physical reasons, where one would expect continuity. In particular, if we start with a continuous topology on $H$ (for example the one induced by an $L_p$ or a Sobolev norm), we would like to define a continuous topology also on the Weyl algebra which is equivalent to the initial one on $H$, at least in the classical limit $\hbar \to 0$ where $\sigma = 0$. Such a topology can be defined for the GNS representation of the Weyl algebra, to which we now turn.

\begin{definition}
A linear functional $\phi\colon \mathcal{A} \to \mathbb{C}$ on a $C^*$ algebra $\mathcal{A}$ is called a state if $\phi(\1) = 1$ (normalization) and $\phi\left( a^* a \right) \geq 0$ for all $a \in \mathcal{A}$ (positivity).
\end{definition}
We then have
\begin{theorem}
Given a real linear functional $\mu_1 \colon H \to \mathbb{R}$ and a real symmetric positive bilinear functional $\mu_2 \colon H \times H \to [0,\infty)$, the functional $\phi$ on $\mathrm{CCR}(H,\sigma)$ given by
\begin{equation}
\label{eq:phi_wf}
\phi\left( W(f) \right) = \exp\left[ \mathi \mu_1(f) - \frac{1}{2} \mu_2(f,f) \right]
\end{equation}
defines a state if $\mu_2(f,f) \mu_2(g,g) \geq \frac{1}{4} \sigma(f,g)^2$. 
\end{theorem}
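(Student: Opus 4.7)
The plan is to verify the two defining properties of a state: normalisation and positivity. Normalisation is immediate, since $W(0) = \1$ and by linearity $\mu_1(0)=\mu_2(0,0)=0$, giving $\phi(\1) = \mathe^0 = 1$. All real work lies in positivity, and I would proceed by first establishing it on the dense $\star$-subalgebra of finite linear combinations $a = \sum_{j} c_j W(f_j)$ and then invoking continuity to extend to the $C^*$-closure.

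For positivity on the dense subalgebra, the plan is to unfold $a^* a$ using the Weyl relations. With $W(f)^*=W(-f)$ and $W(-f_k)W(f_j) = W(f_j-f_k)\exp\left[\frac{\mathi}{2}\sigma(f_k,f_j)\right]$, applying $\phi$ and expanding $\mu_2(f_j-f_k,f_j-f_k)$ via symmetry yields
\begin{equation*}
\phi(a^*a) = \sum_{j,k} \bar{d}_k d_j \exp\bigl[\, B_{jk} \,\bigr], \qquad d_j \defeq c_j \, \mathe^{\mathi\mu_1(f_j)} \mathe^{-\frac{1}{2}\mu_2(f_j,f_j)},
\end{equation*}
where $B_{jk} \defeq \mu_2(f_j,f_k) + \frac{\mathi}{2}\sigma(f_k,f_j)$. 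The task reduces to showing that the matrix with entries $\mathe^{B_{jk}}$ is positive semidefinite.

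I would then verify that $B$ itself is Hermitian and positive semidefinite. Hermiticity follows at once from symmetry of $\mu_2$ and antisymmetry of $\sigma$. For positivity, I would split the complex vector as $d_j = \alpha_j + \mathi\beta_j$ and set $u = \sum_j \alpha_j f_j$, $v = \sum_j \beta_j f_j$; the symmetric and antisymmetric parts of $\bar{d}_j d_k$ match the two pieces of $B$, yielding
\begin{equation*}
\sum_{j,k} \bar{d}_j d_k B_{jk} \;=\; \mu_2(u,u)+\mu_2(v,v) + \sigma(u,v).
\end{equation*}
The hypothesis $\mu_2(f,f)\mu_2(g,g) \geq \frac{1}{4}\sigma(f,g)^2$ together with AM--GM gives $\abs{\sigma(u,v)} \leq 2\sqrt{\mu_2(u,u)\mu_2(v,v)} \leq \mu_2(u,u)+\mu_2(v,v)$, establishing the required nonnegativity.

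Finally, the exponentiation step invokes Schur's product theorem: since the Hadamard product of positive semidefinite matrices is positive semidefinite, so is every Hadamard power $B^{\odot n}/n!$, and hence the norm-convergent sum $\mathe^{\odot B}$. Therefore $\phi(a^*a) \geq 0$ on the dense subalgebra. Extension to $\mathrm{CCR}(H,\sigma)$ is standard: a positive normalised linear functional on a dense unital $\star$-subalgebra of a $C^*$-algebra is automatically continuous (via Cauchy--Schwarz applied to the sesquilinear form $(a,b)\mapsto \phi(a^*b)$, bounding $\abs{\phi(a)} \leq \sqrt{\phi(a^*a)}$), so $\phi$ extends uniquely by density. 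The main obstacle I anticipate is the bookkeeping in the step that decomposes $\sum \bar{d}_j d_k B_{jk}$ into its real and imaginary pieces correctly; the actual inequality used is a very mild AM--GM, and the Schur exponential step is a one-line invocation.
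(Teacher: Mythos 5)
Your proof is correct and takes essentially the same route as the paper's: both expand $\phi(a^*a)$ via the Weyl relations, reduce positivity to positive-semidefiniteness of the Hadamard exponential of the Hermitian matrix with entries $\mu_2(f_j,f_k)\pm\tfrac{\mathi}{2}\sigma(f_j,f_k)$, and establish positivity of that inner matrix by evaluating the quadratic form $\mu_2(u,u)+\mu_2(v,v)\pm\sigma(u,v)$ and invoking the hypothesis. The only differences are cosmetic: you explicitly name the Schur product theorem where the paper tacitly uses it, and you add a density-and-continuity remark that the paper does not spell out.
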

\begin{proof}
Since the normalization condition is fulfilled, we only need to check positivity. Consider thus an element $a = \sum_k \alpha_k W(f_k) \in \mathrm{CCR}(H,\sigma)$ with all $f_i$ linearly independent. By the Weyl relations, we have
\begin{splitequation}
\phi\left( a^* a \right) &= \sum_{k,l} \alpha_k^* \alpha_l \phi\left( W(-f_k) W(f_l) \right) \\
&= \sum_{k,l} \alpha_k^* \alpha_l \exp\left[ \frac{\mathi}{2} \sigma(f_k,f_l) \right] \phi\left( W(-f_k+f_l) \right) \\
&= \sum_{k,l} \alpha_k^* \alpha_l \exp\left[ \frac{\mathi}{2} \sigma(f_k,f_l) + \mathi \mu_1(-f_k+f_l) - \frac{1}{2} \mu_2(-f_k+f_l,-f_k+f_l) \right] \\
&= \sum_{k,l} \left[ \alpha_k \, \mathe^{\mathi \mu_1(f_k) - \frac{1}{2} \mu_2(f_k,f_k)} \right]^* \left[ \alpha_l \, \mathe^{\mathi \mu_1(f_l) - \frac{1}{2} \mu_2(f_l,f_l)} \right] \exp\left[ \mu_2(f_k,f_l) + \frac{\mathi}{2} \sigma(f_k,f_l) \right] \eqend{,}
\end{splitequation}
which should be non-negative. This is the case if the Hermitean matrix $M$ with entries $M_{kl} = \exp\left[ \mu_2(f_k,f_l) + \frac{\mathi}{2} \sigma(f_k,f_l) \right]$ is positive semidefinite, which means that all its eigenvalues are non-negative. In turn, this holds when all eigenvalues of the Hermitean matrix $N$ with entries $N_{kl} = \mu_2(f_k,f_l) + \frac{\mathi}{2} \sigma(f_k,f_l)$ are non-negative, which is the case if $N$ is the Gram matrix of the vectors $\{ f_k \}$ with respect to the scalar product $\left( \cdot, \cdot \right)_1$, defined for real $f$ by
\begin{equation}
\label{eq:hilbert_1_scalarprod}
\left( \cdot, \cdot \right)_1 = \mu_2 + \frac{\mathi}{2} \sigma
\end{equation}
and extended to complex $f$ by sesquilinearity. That is, we need
\begin{splitequation}
\left( f \pm \mathi g, f \pm \mathi g \right)_1 &= \mu_2(f,f) + \mu_2(g,g) \mp \sigma(f,g) \geq 0 \\
&\Leftrightarrow \mu_2(f,f) + \mu_2(g,g) \geq \abs{\sigma(f,g)} \\
&\Leftrightarrow \left[ \sqrt{ \mu(f,f) } - \sqrt{ \mu(g,g) } \right]^2 + 2 \sqrt{ \mu(f,f) \mu(g,g) } \geq \abs{\sigma(f,g)} \eqend{,}
\end{splitequation}
and the last equality holds under the stated assumption.
\end{proof}
The states defined in this way are called quasi-free (or Gaussian) with non-vanishing one-point function. Given a state, the GNS construction gives a Hilbert space $\mathcal{H}$, a cyclic representation $\pi$ of $\mathcal{A}$ such that $\pi(\mathcal{A}) \subseteq \mathcal{B}(\mathcal{H})$ and a cyclic vector $\Omega \in \mathcal{H}$ such that
\begin{equation}
\label{eq:phi_omega}
\phi\left( a \right) = \left( \Omega, \pi(a) \Omega \right)
\end{equation}
for all $a \in \mathcal{A}$. If the symplectic form $\sigma$ is non-degenerate, for a quasi-free state $\pi$ is the regular Fock representation over the one-particle Hilbert space $\mathcal{H}_1$ that is obtained by complexifying $H$ and completing it with respect to the norm $\norm{ f }_1 \defeq \sqrt{\left( f, f \right)_1} = \sqrt{\mu_2(f,f)}$.

On $\pi(\mathcal{A})$, we employ the strong operator topology, which is the one induced by the seminorms $\{ \norm{ \pi(a) \Psi } \colon \Psi \in \mathcal{H} \}$ for $a \in \mathcal{A}$. It is weaker than the operator norm topology, but makes the representations of the Weyl operators $\pi(W(f))$ continuous, and in particular allows to define the coordinate operators themselves:
\begin{definition}
Given a quasi-free state $\phi$ and the Fock representation $\pi(A)$, the coordinate operators $X(f)$ are defined by
\begin{equation}
\label{eq:xmu_def}
X(f) \Psi \defeq \lim_{t \to 0} \frac{\pi( W(t f) ) - \1}{\mathi t} \Psi
\end{equation}
for all those $\Psi \in \mathcal{H}$ for which the limit exists (in the norm topology on $\mathcal{H}$).
\end{definition}
Since the Fock representation is regular, the representations of the Weyl operators $\pi(W(t f))$ are continuous in $t$ in the strong operator topology, and thus form a strongly continuos one-parameter group $\bigl( \pi(W(t f)) \bigr)_{t \in \mathbb{R}}$ for any fixed $f$. By Stone's theorem, their generators $X(f)$ are self-adjoint operators defined on a dense domain in $\mathcal{H}$, which in particular includes the vector $\Omega$. However, they are unbounded (since otherwise it would follow that the $W(t f)$ are continuous in $t$ in the operator norm topology).

For a quasi-free state $\phi$ and its GNS representation $\pi$, it is then easy to see that
\begin{equations}[eq:coord_expectation_mu12]
\left( \Omega, X(f) \Omega \right) &= \mu_1(f) \eqcolon \phi\left( X(f) \right) \eqend{,} \\
\left( \Omega, X(f) X(g) \Omega \right) &= \left( f, g \right)_1 = \mu_2(f,g) + \frac{\mathi}{2} \sigma(f,g) \eqcolon \phi\left( X(f) X(g) \right)
\end{equations}
for $f, g \in H$.

Comparing with the results of our earlier work~\cite{froebmuchpapadopoulos2023a,froebmuchpapadopoulos2023b}, we thus would like to set
\begin{equations}[eq:mu12_def_proposal]
\mu_1(f) &= \int x^\mu f_{(\mu)}(x) \total^4 x \eqend{,} \\
\mu_2(f,f) &= \kappa^2 \iint f_{(\mu)}(x) f_{(\nu)}(x') \eta^{\mu\nu} \int F_{-+}(t,t',\vec{p}) \, \mathe^{\mathi \vec{p} \cdot (\vec{x}-\vec{x}')} \frac{\total^3 \vec{p}}{(2\pi)^3} \total^4 x \total^4 x' \eqend{,}
\end{equations}
where the correlation function $F_{-+}(t,t',\vec{p})$ is computed explicitly in \cite[Eq. (39)]{froebmuchpapadopoulos2023a}:
\begin{equation}
\label{eq:fmunu}
F_{-+}(t, t', \vec{p}) = \frac{1}{4 \abs{\vec{p}}^3} \, \mathe^{- \mathi \abs{\vec{p}} (t-t')} \left[ 1 + \mathi \abs{\vec{p}} (t-t') \right] \eqend{.}
\end{equation}
However, there are two problems with this choice. First of all, the inverse Fourier transform of $F$~\eqref{eq:fmunu} is infrared-divergent, since the function diverges $\sim \abs{\vec{p}}^{-3}$ for small $\abs{\vec{p}}$. Therefore, $\mu_2$~\eqref{eq:mu12_def_proposal} is finite only if the mean
\begin{equation}
\label{eq:def_fbar_mean}
\bar{f}_{(\mu)} \defeq \int f_{(\mu)}(x) \total^4 x
\end{equation}
vanishes. Second, even for functions with vanishing mean $\mu_2$ is not positive definite because of the indefinite signature of $\eta^{\mu\nu}$, and the fact that even the symmetrized $F_{-+}(t, t', \vec{p}) + F_{-+}(t', t, \vec{p})$ is not positive. Fortunately, both of these problems can be cured: instead of the above choice, we have to adopt an extension of the Dereziński--Meissner state~\cite{DerezinskiMeissner2006}, which deals with the infrared divergence and the fact that $F_{-+}(t, t', \vec{p}) + F_{-+}(t', t, \vec{p})$ is not positive, and the indefinite signature is resolved by employing a Krein space formulation.

\subsection{Krein space}
\label{subsec:krein}

Krein spaces arise in many situations in quantum field theory, such as quantum electrodynamics where they provide an extension of the earlier Gupta--Bleuler construction. We consider
\begin{definition}
A (complex) Krein space is a triple $(V, \left\langle \cdot, \cdot \right\rangle, J)$, where
\begin{itemize}
\item $V$ is a complex vector space,
\item $\left\langle \cdot, \cdot \right\rangle \colon V \times V \to \mathbb{C}$ is a Hermitian sesquilinear form (an indefinite inner product),
\item $J \colon V \to V$ (sometimes called a fundamental symmetry) is a linear map that is symmetric with respect to the indefinite inner product, i.e., fulfilling
\begin{equation}
\left\langle J f, g \right\rangle = \left\langle f, J g \right\rangle \quad \text{for all}\quad f, g \in V \eqend{,}
\end{equation}
\item the inner product $\left( \cdot, \cdot \right)$ defined by
\begin{equation}
\label{eq:krein_relation_innerproducts}
\left( f, g \right) \defeq \left\langle f, J g \right\rangle
\end{equation}
is positive definite and induces a majorant topology on $V$, and
\item $V$ is complete with respect to the corresponding norm.
\end{itemize}
\end{definition}
The weak topology on $V$ is the one defined by the family of seminorms $\{ \abs{ \left\langle f, g \right\rangle }\colon g \in V \}$ for all $f \in V$. It is the weakest partial majorant topology on $V$~\cite[Thm.~II.2.1]{bognar1974}, but unfortunately only a majorant topology if $\dim V < \infty$~\cite[Thm.~IV.1.4]{bognar1974}. The condition~\eqref{eq:krein_relation_innerproducts} is thus a true restriction on an indefinite inner product space.

We now use that any element $f \in V$ can be decomposed in positive and negative parts $f_\pm = (f \pm J f)/2$, for which we obtain
\begin{equation}
\pm \left\langle f_\pm, f_\pm \right\rangle = \left( f_\pm, f_\pm \right) \geq 0 \eqend{.}
\end{equation}
That is, we may decompose $V = V^+ + V^0 + V^-$ with $\left\langle f, f \right\rangle > 0$ for $f \in V^+$, $\left\langle f, f \right\rangle < 0$ for $f \in V^-$ and $\left\langle f, f \right\rangle = 0$ for $f \in V^0$. Since $\left( \cdot, \cdot \right)$ is positive definite, the set $V^0$ that consists of all $f \in V$ for which both $f_\pm = 0$, namely $V^0 = \operatorname{ker} J$, is actually empty. The inner product $\left\langle \cdot, \cdot \right\rangle$ is therefore non-degenerate, and $J$ is invertible, which together with the symmetry condition $\left\langle J f, g \right\rangle = \left\langle f, J g \right\rangle$ shows that $J^2 = \1$, i.e., $J$ is an involution. We thus have $V = V^+ + V^-$ (the fundamental decomposition), where each of these subspaces is closed in the majorant topology~\cite[Thm.~IV.5.2]{bognar1974}, and thus both $V^+$ and $V^-$ are Hilbert spaces. While the concrete decomposition depends on $J$, the corresponding majorant topologies on $V$ are all equivalent~\cite[Thm.~IV.6.4]{bognar1974}.

In our case, we choose an arbitrary timelike unit vector $u^\mu$ ($\eta_{\mu\nu} u^\mu u^\nu = -1$) and set
\begin{equation}
\label{eq:krein_j_def}
( J f )_{(\mu)}(x) \defeq \left( \delta_\mu^\nu + 2 u_\mu u^\nu \right) f_{(\nu)}(x) \eqend{,}
\end{equation}
such that $J^2 = \mathrm{id}$ and
\begin{equation}
f_{(\mu)}(x) \eta^{\mu\nu} ( J f )_{(\nu)}(x') = f_{(\mu)}(x) \left( \eta^{\mu\nu} + 2 u^\mu u^\nu \right) f_{(\rho)}(x') \eqend{.}
\end{equation}
Since for any $\alpha_\mu$ we have
\begin{splitequation}
\label{eq:krein_j_positive}
\alpha_\mu \left( \eta^{\mu\nu} + 2 u^\mu u^\nu \right) \alpha_\nu &= \left[ 1 + 2 \sum_{i=1}^3 ( u_i )^2 \right] ( \alpha_0 )^2 + \sum_{i=1}^3 ( \alpha_i )^2 + 2 \left( \sum_{i=1}^3 u^i \alpha_i \right)^2 \\
&\geq ( \alpha_0 )^2 + \sum_{i=1}^3 ( \alpha_i )^2 \geq \abs{ \alpha_\mu \eta^{\mu\nu} \alpha_\nu }
\end{splitequation}
the matrix $\eta^{\mu\nu} + 2 u^\mu u^\nu$ is positive definite and dominates $\eta_{\mu\nu}$. Hence, it induces a majorant topology on $V$, and completing $V$ with respect to the scalar product~\eqref{eq:krein_relation_innerproducts} yields the desired Krein space structure, which we construct explicitly in the next subsection.

\subsection{Dereziński--Meissner state}
\label{subsec:dm_state}

The infrared divergence appearing in the inverse Fourier transform of $F^{\mu\nu}$~\eqref{eq:fmunu} is analogous to the one for a massless scalar field in two dimensions, which (among many others, see for example~\cite{nakanishi1980,morchiopierottistrocchi1990}) was studied and finally resolved by Dereziński and Meissner~\cite{DerezinskiMeissner2006}. We generalize their arguments to our situation, analogously to the construction of Schubert~\cite{schubert2011}. For this, we first compute the Fourier transform in~\eqref{eq:mu12_def_proposal} for functions with vanishing mean:
\begin{lemma}
For a function $f \in C_0^\infty(\mathbb{R}^4)$ with $\bar{f} = 0$, we have
\begin{splitequation}
&\iint f(x) f(x') \int \frac{1}{4 \abs{\vec{p}}^3} \, \mathe^{- \mathi \abs{\vec{p}} (t-t')} \left[ 1 + \mathi \abs{\vec{p}} (t-t') \right] \, \mathe^{\mathi \vec{p} \cdot (\vec{x}-\vec{x}')} \frac{\total^3 \vec{p}}{(2\pi)^3} \total^4 x \total^4 x' \\
&= - \frac{1}{16 \pi^2} \iint f(x) f(x') \ln\abs{ (x-x')^2 } \total^4 x \total^4 x' \eqend{.}
\end{splitequation}
\end{lemma}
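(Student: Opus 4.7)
The plan is to carry out the inner $\vec{p}$-integral essentially in closed form, after introducing an infrared cutoff $\mu$, and then invoke $\bar f = 0$ at the very end to discard the $\mu$-dependent pieces. Both the inner integral, call it $K(\tau,\vec{r})$, and the kernel $-\frac{1}{16\pi^2}\ln\abs{-\tau^2+r^2}$ depend only on $\tau = t-t'$ and $\vec{r} = \vec{x}-\vec{x}'$. A first reduction I would make is to use the $(x,x')$-swap symmetry: the angular $\vec{p}$-integration forces $K$ to depend on $\vec{r}$ only through $r = \abs{\vec{r}}$, while the $\tau$-dependence of the integrand shows that $\operatorname{Im} K$ is odd in $\tau$. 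Hence $\operatorname{Im} K$ drops out of the double integral against the symmetric $f(x)f(x')$, and I only need to match $\operatorname{Re} K$ with $-\frac{1}{16\pi^2}\ln\abs{(x-x')^2}$ modulo terms that vanish against $\bar f = 0$.

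Setting $\omega = \abs{\vec{p}}$ and performing the angular integration (which produces the familiar $4\pi\sin(\omega r)/(\omega r)$), $\operatorname{Re} K$ becomes a one-dimensional integral in $\omega$ with a logarithmic IR divergence at $\omega = 0$. I would cut this off by restricting to $\omega \geq \mu$ and apply product-to-sum identities with $a = r+\tau$ and $b = r-\tau$ so as to reduce everything to two standard evaluations: integrals of the form $\int_\mu^\infty \omega^{-2}\sin(\omega a)\total\omega$, handled by integration by parts together with the small-argument expansion $\operatorname{Ci}(\mu\abs{a}) = \gamma + \ln(\mu\abs{a}) + O(\mu^2)$ to produce $a[1-\gamma-\ln(\mu\abs{a})] + O(\mu)$; and a Frullani integral $\int_0^\infty \omega^{-1}[\cos(\omega b)-\cos(\omega a)]\total\omega = \ln\abs{a/b}$.

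The key algebraic collapse I expect is that, after assembling the pieces, the combinations $-a\ln\abs{a}+\tau\ln\abs{a} = -r\ln\abs{a}$ and $-b\ln\abs{b}-\tau\ln\abs{b} = -r\ln\abs{b}$ conspire so that all $a$- and $b$-dependent logarithms merge into $\ln\abs{ab} = \ln\abs{r^2-\tau^2} = \ln\abs{(x-x')^2}$, with precisely the prefactor $-1/(16\pi^2)$ claimed. The remaining $\mu$-dependent piece is a genuine constant $C_\mu = (1-\gamma-\ln\mu)/(8\pi^2)$, independent of $\tau$ and $\vec{r}$. Integrating against $f(x)f(x')$ then multiplies $C_\mu$ by $\bar f^{\,2} = 0$, killing it, and the $\mu \to 0$ limit of the rest gives exactly the right-hand side.

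The main obstacle is bookkeeping: I must verify that the leftover really is a pure $(\tau,\vec{r})$-constant and not a disguised function of, say, $\ln\mu \cdot \ln r$ or a polynomial in $\tau$ of degree $\geq 2$, since anything of the latter sort would survive integration against $f(x)f(x')$ even when $\bar f$ vanishes. A useful sanity check is to compare two regularizations, e.g.\ the sharp IR cutoff above versus a damping factor $\mathe^{-\epsilon\omega}$ or a mass-type replacement $\omega^3 \to \omega(\omega^2+m^2)$ in the denominator, following the Dereziński--Meissner strategy; the finite parts should agree, and the $\mu$-, $\epsilon$- or $m$-dependent pieces should all be constants that are eliminated by the vanishing-mean hypothesis.
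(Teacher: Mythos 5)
Your plan is correct and follows the same overall strategy as the paper: regularize the infrared divergence, carry out the angular and radial $\vec{p}$-integrals in closed form, and invoke $\bar f = 0$ at the end to discard the regulator-dependent constant. The technical execution differs in three small ways, all of which are sound. First, you discard $\operatorname{Im} K$ up front by the $\tau \mapsto -\tau$ parity argument, whereas the paper keeps the full complex answer, writes it via the generalized exponential integral $\operatorname{E}_1$, and only at the very end uses $\ln z = \ln\abs{z} + \mathi\arg z$ to split off a $\sgn$-term which cancels under the $x \leftrightarrow x'$ swap. Your version of this observation is cleaner because it halves the bookkeeping before any special functions appear. Second, you use a single IR cutoff $\mu$, while the paper uses both an IR cutoff $\delta$ and a UV damping factor $\mathe^{-\epsilon\abs{\vec{p}}}$; a UV regulator is not strictly necessary since the radial integral is conditionally convergent at $\infty$, so your choice is economical, though the paper's extra regulator makes the integration-by-parts boundary terms manifestly vanish. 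Third, you reduce to a $\operatorname{Ci}$-expansion plus a Frullani integral instead of the paper's $\operatorname{E}_1$ asymptotics; these are equivalent (and in fact closely related) routes. Your sanity check is well-placed and resolves favorably: carrying out the arithmetic gives $\operatorname{Re} K = \tfrac{1-\gamma-\ln\mu}{8\pi^2} - \tfrac{1}{16\pi^2}\ln\abs{(x-x')^2}$ with the claimed collapse $a - \tau = b + \tau = r$, so the leftover really is a $(\tau,\vec{r})$-independent constant, not a hidden $\ln\mu\cdot\ln r$ or quadratic-in-$\tau$ term, and it is killed by $\bar f^2 = 0$.
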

\begin{proof}
The inverse Fourier transform is a priori defined in the sense of distributions, namely
\begin{splitequation}
&\iint f(x) f(x') \int \frac{1}{4 \abs{\vec{p}}^3} \, \mathe^{- \mathi \abs{\vec{p}} (t-t')} \left[ 1 + \mathi \abs{\vec{p}} (t-t') \right] \, \mathe^{\mathi \vec{p} \cdot (\vec{x}-\vec{x}')} \frac{\total^3 \vec{p}}{(2\pi)^3} \total^4 x \total^4 x' \\
&\quad= \int \frac{1}{4 \abs{\vec{p}}^3} \left[ \Bigl( 1 + \abs{\vec{p}} \partial_{k^0} - \abs{\vec{p}} \partial_{q^0} \Bigr) ( \mathcal{F} f )(k^0,-\vec{p}) ( \mathcal{F} g )(q^0,\vec{p}) \right]_{q^0 = - k^0 = \abs{\vec{p}}} \frac{\total^3 \vec{p}}{(2\pi)^3}
\end{splitequation}
with the Fourier transform defined by
\begin{equation}
( \mathcal{F} f )(p^0,\vec{p}) \defeq \int f(t,\vec{x}) \, \mathe^{\mathi p^0 t - \mathi \vec{p} \cdot \vec{x}} \total t \total^3 \vec{x} \eqend{.}
\end{equation}
Since $f$ has vanishing mean, its Fourier transform vanishes at the origin, and since it is smooth the Fourier transforms decays faster than any power of $\abs{\vec{p}}$, hence the integral over $\vec{p}$ is well defined in the ordinary sense. We may thus compute the original Fourier transform by introducing a cutoff $\delta$ for small $\abs{\vec{p}}$ and a factor $\mathe^{- \epsilon \abs{\vec{p}}}$ which serves as a cutoff for large $\abs{\vec{p}}$, and take the limit $\delta,\epsilon \to 0^+$ at the end. It follows that
\begin{splitequation}
&\int \frac{1}{4 \abs{\vec{p}}^3} \, \mathe^{- \mathi \abs{\vec{p}} t} \left( 1 + \mathi \abs{\vec{p}} t \right) \, \mathe^{\mathi \vec{p} \cdot \vec{x}} \frac{\total^3 \vec{p}}{(2\pi)^3} \\
&\quad= \lim_{\delta,\epsilon \to 0^+} \int_{\abs{\vec{p}} \geq \delta} \frac{1}{4 \abs{\vec{p}}^3} \, \mathe^{- \mathi \abs{\vec{p}} t} \left( 1 + \mathi \abs{\vec{p}} t \right) \, \mathe^{\mathi \vec{p} \cdot \vec{x}} \mathe^{- \epsilon \abs{\vec{p}}} \frac{\total^3 \vec{p}}{(2\pi)^3} \\
&\quad= \frac{1}{8 \pi^2 \abs{\vec{x}}} \lim_{\delta,\epsilon \to 0^+} \int_\delta^\infty \, \mathe^{- \mathi p t} \left( 1 + \mathi p t \right) \, \frac{\sin(p \abs{\vec{x}})}{p^2} \mathe^{- \epsilon p} \total p \\
&\quad= \frac{1}{8 \pi^2 \abs{\vec{x}}} \lim_{\delta,\epsilon \to 0^+} \left[ \frac{- \abs{\vec{x}} + \mathi \epsilon}{2} \operatorname{E}_1\left[ \epsilon + \mathi p (\abs{\vec{x}}+t) \right] - \frac{\abs{\vec{x}} + \mathi \epsilon}{2} \operatorname{E}_1\left[ \epsilon - \mathi p (\abs{\vec{x}}-t) \right] - \frac{\sin(p \abs{\vec{x}})}{p} \mathe^{- \mathi p t - \epsilon p} \right]_\delta^\infty \\
&\quad= \frac{1}{16 \pi^2} \lim_{\delta,\epsilon \to 0^+} \left[ 2 - 2 \gamma - 2 \ln \delta - \left( 1 + \mathi \frac{\epsilon}{\abs{\vec{x}}} \right) \ln\left[ \epsilon - \mathi (\abs{\vec{x}}-t) \right] - \left( 1 - \mathi \frac{\epsilon}{\abs{\vec{x}}} \right) \ln\left[ \epsilon + \mathi (\abs{\vec{x}}+t) \right] \right] \eqend{,}
\end{splitequation}
where $\operatorname{E}_n(z)$ is the generalized exponential integral of order $n$ and $\gamma$ is the Euler--Mascheroni constant. To compute the limit, we note that constants do not contribute because $f$ has vanishing mean, while for the logarithms we use $\ln z = \ln \abs{z} + \mathi \arg z$ to obtain
\begin{splitequation}
\label{eq:f_log_computation}
&\iint f(x) f(x') \int \frac{1}{4 \abs{\vec{p}}^3} \, \mathe^{- \mathi \abs{\vec{p}} (t-t')} \Bigl[ 1 + \mathi \abs{\vec{p}} (t-t') \Bigr] \, \mathe^{\mathi \vec{p} \cdot (\vec{x}-\vec{x}')} \frac{\total^3 \vec{p}}{(2\pi)^3} \total^4 x \total^4 x' \\
&\quad= - \frac{1}{16 \pi^2} \lim_{\epsilon \to 0^+} \iint f(x) f(x') \Bigl[ \ln\left[ \epsilon - \mathi \left( \abs{\vec{x}-\vec{x}'}-t+t' \right) \right] + \ln\left[ \epsilon + \mathi \left( \abs{\vec{x}-\vec{x}'}+t-t' \right) \right] \Bigr] \total^4 x \total^4 x' \\
&\quad= - \frac{1}{16 \pi^2} \iint f(x) f(x') \Bigl[ \ln\abs{ \abs{\vec{x}-\vec{x}'}-t+t' } + \ln\abs{ \abs{\vec{x}-\vec{x}'}+t-t' } \\
&\qquad\qquad- \frac{\mathi \pi}{2} \sgn\left( \abs{\vec{x}-\vec{x}'}-t+t' \right) + \frac{\mathi \pi}{2} \sgn\left( \abs{\vec{x}-\vec{x}'}+t-t' \right) \Bigr] \total^4 x \total^4 x' \\
&\quad= - \frac{1}{16 \pi^2} \iint f(x) f(x') \ln\abs{ \abs{\vec{x}-\vec{x}'}^2 - (t-t')^2 } \total^4 x \total^4 x' \eqend{,}
\end{splitequation}
where in the third equality we exchanged $x$ and $x'$ in the last term in the integrand, which then cancelled with the third one, and combined the remaining logarithms.
\end{proof}
Unfortunately, in contrast to the two-dimensional case, where $\ln \abs{ (x-x')^2 }$ is a conditionally negative distribution such that $\iint f(x) \ln \abs{ (x-x')^2 } f(x') \total^2 x \total^2 x' \leq 0$ for all functions $f$ with vanishing mean~\cite{DerezinskiMeissner2006,schubert2011}, in four dimensions it is indefinite if the support of $f$ is large. In addition to the construction of Dereziński and Meissner, we thus have to cut off the positive part of $\ln \abs{ (x-x')^2 }$ and define
\begin{definition}[Dereziński--Meissner state]
\label{def:dmstate}
Given a test function $\psi \in C_0^\infty(\mathbb{R}^4)$ with mean 1 (but otherwise arbitrary) and $\alpha \in (0,\infty)$, consider the bilinear form
\begin{splitequation}
\label{eq:dmstate_tau}
\Delta_{\alpha,\psi}(f,g) &\defeq - \frac{\kappa^2}{16 \pi^2} \iint \ln\abs{ (x-x')^2 }_- ( P_\psi f )_\mu(x) \eta^{\mu\nu} ( P_\psi g )_\nu(x') \total^4 x \total^4 x' \\
&\quad+ \frac{\mathi}{2} \sigma(P_\psi f, P_\psi g) + \alpha \kappa^2 \left( \bar{f}_\mu + \frac{\mathi}{2 \alpha \kappa^2} \sigma(f,\psi)_\mu \right) \eta^{\mu\nu} \left( \bar{g}_\nu - \frac{\mathi}{2 \alpha \kappa^2} \sigma(g,\psi)_\nu \right) \\
&= - \frac{\kappa^2}{16 \pi^2} \iint \ln\abs{ (x-x')^2 }_- ( P_\psi f )_\mu(x) \eta^{\mu\nu} ( P_\psi g )_\nu(x') \total^4 x \total^4 x' \\
&\quad+ \alpha \kappa^2 \, \bar{f}_\mu\, \eta^{\mu\nu} \bar{g}_\nu + \frac{1}{4 \alpha \kappa^2} \sigma(f,\psi)_\mu\, \eta^{\mu\nu} \sigma(g,\psi)_\nu + \frac{\mathi}{2} \sigma(f, g) \eqend{,}
\end{splitequation}
where the projection operator $P_\psi$ on functions with vanishing mean is defined by
\begin{equation}
\label{eq:dmstate_proj}
(P_\psi f)_{(\mu)}(x) \defeq f_{(\mu)}(x) - \bar{f}_{(\mu)} \, \psi(x) \eqend{,}
\end{equation}
and the bidistribution whose formal integral kernel we denote by $\ln\abs{ (x-x')^2 }_-$ is defined by
\begin{splitequation}
\label{eq:ln_abs_x2_minus_def}
&\iint f(x) \ln\abs{ (x-x')^2 }_- g(x') \total^4 x \total^4 x' \\
&\quad\defeq \frac{1}{4} \min\left[ \iint [ f(x) + g(x) ] \ln\abs{ (x-x')^2 } [ f(x') + g(x') ] \total^4 x \total^4 x', 0 \right] \\
&\qquad- \frac{1}{4} \min\left[ \iint [ f(x) - g(x) ] \ln\abs{ (x-x')^2 } [ f(x') - g(x') ] \total^4 x \total^4 x', 0 \right] \eqend{.}
\end{splitequation}
The Dereziński--Meissner state is the quasi-free state $\omega$ on the Weyl algebra $\mathcal{A}$ with scalar product
\begin{equation}
\label{eq:dmstate_scalarprod}
\left( f , g \right)_1 \defeq \Delta_{\alpha,\psi}\left( f, J g \right)
\end{equation}
on the one-particle Hilbert space $\mathcal{H}_1$ of the Fock representation of $\mathcal{A}$, which depends on the Krein space involution $J$~\eqref{eq:krein_j_def}.
\end{definition}
This definition makes sense if $\left( f , f \right)_1 \geq 0$, which is clear from the form of $\Delta_{\alpha,\psi}$~\eqref{eq:dmstate_tau} and the positive definiteness of $J$~\eqref{eq:krein_j_positive} if the first term is positive. That this holds follows in turn from the definition~\eqref{eq:ln_abs_x2_minus_def}. Comparing the scalar product~\eqref{eq:dmstate_scalarprod} with the expression~\eqref{eq:hilbert_1_scalarprod}, we see that we have replaced the ill-defined, non-positive functional $\mu_2$ of~\eqref{eq:mu12_def_proposal} by a well-defined and positive one, namely we define
\begin{equations}[eq:mu12_def]
\mu_1(f) &\defeq \int x^\mu f_{(\mu)}(x) \total^4 x \eqend{,} \\
\mu_2(f,f) &\defeq \Delta_{\alpha,\psi}\left( f, J f \right) \eqend{,}
\end{equations}
and change the symplectic form $\sigma \to \sigma( \cdot, J \cdot )$, with $J$ defined by~\eqref{eq:krein_j_def}. In a sense, this is the minimal change that we have to make to obtain a quantum theory of the noncommutative coordinates $X^{(\mu)}$. We note that the cutoff imposed by the bidistribution $\ln\abs{ (x-x')^2 }_-$ becomes irrelevant if $f$ has sufficiently small support:
\begin{lemma}
\label{lemma:minimal_variance}
Let $f(x) = \chi_p(x) - \chi_q(x)$, where
\begin{equation}
\chi_p(x) = \left( \frac{\alpha}{\pi} \right)^2 \exp\left[ - \alpha \sum_{\mu=1}^4 (x^\mu-p^\mu)^2 \right]
\end{equation}
is a Gaussian with variance $2/\alpha$. In the limit $\alpha \to \infty$, we have
\begin{equations}[eq:lemma_minimal_variance]
&\lim_{\alpha \to \infty} \left[ \iint f(x) \ln\abs{ (x-x')^2 }_- \, f(x') \total^4 x \total^4 x' - \iint f(x) \ln\abs{ (x-x')^2 } f(x') \total^4 x \total^4 x' \right] = 0 \eqend{,} \\
&\lim_{\alpha \to \infty} \left[ \iint f(x) \ln\abs{ (x-x')^2 } f(x') \total^4 x \total^4 x' + 2 \ln\abs{ \alpha (p-q)^2 } \right] = 4 (1-\gamma) \eqend{.}
\end{equations}
\end{lemma}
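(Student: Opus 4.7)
The plan is to prove the second limit first, from which the first will follow almost immediately. Expanding $f = \chi_p - \chi_q$ and exploiting the symmetry and translation-invariance of the kernel, I split
\begin{equation*}
\iint f(x) \ln\abs{ (x-x')^2 } f(x') \total^4 x \total^4 x' = 2 A_{pp} - 2 A_{pq} \eqend{,}
\end{equation*}
where $A_{rs} \defeq \iint \chi_r(x) \chi_s(x') \ln\abs{ (x-x')^2 } \total^4 x \total^4 x'$. Probabilistically $A_{rs}$ is the expectation of $\ln\abs{ (X-X')^2 }$ for independent $X \sim \chi_r$, $X' \sim \chi_s$; the difference of two independent Gaussians is again Gaussian, centered at $r-s$ with covariance proportional to $\alpha^{-1}$, so the rescaling $z = \sqrt{\alpha}\,(y - (r-s))$ reduces each $A_{rs}$ to an integral against the standard Gaussian measure with the $\alpha$-dependence appearing only inside the logarithm.

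For the diagonal term this immediately yields $A_{pp} = - \ln \alpha + C$, with $C \defeq (2\pi)^{-2} \int \mathe^{-\abs{z}^2/2} \ln\abs{ z^2 } \total^4 z$ an $\alpha$-independent constant that is finite because $\ln\abs{z^2}$ has only a (locally $L^1$) logarithmic singularity along the null cone. For the cross term, the mirror change of variables $w = (p-q) + z/\sqrt{\alpha}$ recasts $A_{pq}$ as the integral of $\ln\abs{w^2}$ against a Gaussian probability measure of width $\sim \alpha^{-1/2}$ around $p-q$; since $\ln\abs{w^2}$ is continuous at $w = p-q$ whenever $(p-q)^2 \neq 0$ (tacitly assumed in the statement), a standard approximation-of-the-identity argument gives $A_{pq} \to \ln\abs{(p-q)^2}$ as $\alpha \to \infty$, with tail contributions controlled by the at most logarithmic growth of $\ln\abs{w^2}$ against the Gaussian weight. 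Combining, $\iint f \ln\abs{(x-x')^2} f\, \total^4 x \total^4 x' + 2 \ln\abs{ \alpha(p-q)^2 }$ converges to $2 C$.

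The technical crux is the closed-form evaluation of $C$. I would compute it through the Mellin transform $I(s) \defeq (2\pi)^{-2} \int \mathe^{-\abs{z}^2/2} \abs{ z^2 }^s \total^4 z$, well-defined in a strip of $s$ around the origin, so that $C = I'(0)$. Integrating out the angular dependence of $\vec{z}$ and then substituting $a = z_0^2 + \abs{\vec{z}}^2$, $t = z_0^2/a$ factorises $I(s)$ into a Gamma integral times the Beta-type piece $B(s) \defeq \int_0^1 t^{-1/2}(1-t)^{1/2}\abs{1-2t}^s\total t$. Differentiating at $s=0$ uses $\psi(2) = 1-\gamma$ for the Gamma factor and reduces $B'(0)$, via $t = \sin^2\theta$ and $\phi = 2\theta$, to the standard trigonometric evaluation $\frac{1}{2}\int_0^\pi (1+\cos\phi)\ln\abs{\cos\phi}\total\phi = -\pi(\ln 2)/2$. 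The principal obstacle is tracking the $\ln 2$ contributions from the derivative of the Gamma factor and from $B'(0)$ and checking that they combine with the numerical prefactors to yield the stated pure multiple of $1-\gamma$. Finally, the first limit drops out as a corollary: setting $g = f$ in the definition~\eqref{eq:ln_abs_x2_minus_def} gives $\iint f \ln\abs{(x-x')^2}_- f = \min\!\bigl[ \iint f \ln\abs{(x-x')^2} f,\, 0 \bigr]$, and since the second limit shows $\iint f \ln\abs{(x-x')^2} f \to -\infty$ (dominated by $-2\ln\alpha$), for all sufficiently large $\alpha$ the minimum is attained by $\iint f \ln\abs{(x-x')^2} f$ itself, so the two expressions coincide identically and their difference vanishes in the limit.
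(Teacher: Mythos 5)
Your decomposition of $\iint f \ln\abs{(x-x')^2} f$ into $2A_{pp} - 2A_{pq}$, the approximation-of-identity argument for $A_{pq} \to \ln\abs{(p-q)^2}$, and the rescaling that isolates $A_{pp} = -\ln\alpha + C$, are all correct and in essence the same reduction the paper performs; you merely present it more transparently. Your Mellin/Beta-function route to the constant $C$, on the other hand, is a genuinely different computation from the paper's direct change of variables $u = t - r$, $v = t + r$. Both methods exploit the same structural observation (the Lorentzian form diagonalizes in light-cone-like coordinates, or equivalently is controlled by an angular factor $\abs{\cos 2\theta}$), but the Mellin factorization $I(s) = \frac{2^{s+1}}{\pi}\,\Gamma(s+2)\,B(s)$ cleanly separates the radial $\Gamma$-factor from the angular $\mathrm{Beta}$-type piece, which makes the $\ln 2$ bookkeeping systematic rather than incidental.

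However, if you actually carry the Mellin step to completion you do \emph{not} recover the value stated in the Lemma. With $B(0) = \pi/2$, $\psi(2) = 1-\gamma$, and your (correct) evaluation $B'(0) = -\tfrac{\pi}{2}\ln 2$, one gets
\begin{equation*}
C = I'(0) = I(0)\left[ \ln 2 + \psi(2) + \frac{B'(0)}{B(0)} \right] = \ln 2 + (1-\gamma) - \ln 2 = 1 - \gamma \eqend{,}
\end{equation*}
so the limit is $2C = 2(1-\gamma)$, not $4(1-\gamma)$. This is not a defect of your route: it exposes a factor-of-two slip in the paper's own chain of changes of variables. Tracking constants there, the passage from $\frac{1}{2\pi^2}\int \mathe^{-\abs{y}^2/2}\ln\abs{y^2}\,\total^4 y$ to the form $\frac{2}{\pi}\int_0^\infty\int_0^\infty$ drops a factor of $2$ (the $t$-integral runs over all of $\mathbb{R}$), and the subsequent extension of the $(u,v)$-integral from the wedge $\{v \geq \abs{u}\}$ to all of $\mathbb{R}^2$ should cost a further factor of $\tfrac{1}{4}$ (the integrand is invariant only under the order-four group generated by $(u,v)\mapsto(v,u)$ and $(u,v)\mapsto(-u,-v)$, not under $u\to -u$ alone); the prefactor should be $\tfrac{1}{16\pi}$, not $\tfrac{1}{8\pi}$, giving $2(1-\gamma)$. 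You flag ``tracking the $\ln 2$ contributions'' as the principal obstacle; you should finish that step, since doing so is what makes the discrepancy visible. Your concluding reduction of the first limit to the second --- that $\iint f \ln\abs{(x-x')^2}_- f = \min\bigl[\iint f \ln\abs{(x-x')^2} f,\, 0\bigr]$ and the minimum is eventually attained on the first argument because the integral diverges to $-\infty$ --- is clean and matches the paper's reasoning.
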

\begin{proof}
Since
\begin{equation}
\int \chi_p(x) \total^4 x = \left[ \sqrt{ \frac{\alpha}{\pi} } \int \mathe^{- \alpha x^2} \total x \right]^4 = 1 \eqend{,}
\end{equation}
we have $\lim_{\alpha \to \infty} \chi_p(x) = \delta^4(x-p)$, and it follows that
\begin{splitequation}
&\lim_{\alpha \to \infty} \left[ \iint f(x) \ln\abs{ (x-x')^2 } f(x') \total^4 x \total^4 x' + 2 \ln\abs{ \alpha (p-q)^2 } \right] \\
&\quad= 2 \lim_{\alpha \to \infty} \left( \frac{\alpha}{\pi} \right)^4 \iint \exp\left[ - \alpha \sum_{\mu=1}^4 (x^\mu)^2 - \alpha \sum_{\mu=1}^4 ((x')^\mu)^2 \right] \ln\left[ \alpha \abs{ (x-x')^2 } \right] \total^4 x \total^4 x' \\
&\quad= \frac{2}{\pi^4} \iint \exp\left[ - 2 \sum_{\mu=1}^4 (z^\mu)^2 - \frac{1}{2} \sum_{\mu=1}^4 (y^\mu)^2 \right] \ln\abs{ y^2 } \total^4 y \total^4 z \eqend{,}
\end{splitequation}
where we made the change of variables $y^\mu = \sqrt{\alpha} (x-x')^\mu$, $z^\mu = \frac{1}{2} \sqrt{\alpha} (x+x')^\mu$. Separating time and space and passing to spherical coordinates, this gives
\begin{splitequation}
&\lim_{\alpha \to \infty} \left[ \iint f(x) \ln\abs{ (x-x')^2 } f(x') \total^4 x \total^4 x' + 2 \ln\abs{ \alpha (p-q)^2 } \right] \\
&\quad= \frac{2}{\pi} \int_0^\infty \int_0^\infty \exp\left( - \frac{1}{2} t^2 - \frac{1}{2} r^2 \right) \ln\abs{ t^2 - r^2 } r^2 \total r \total t \\
&\quad= \frac{1}{8 \pi} \int_{-\infty}^\infty \int_{-\infty}^\infty \exp\left( - \frac{1}{4} u^2 - \frac{1}{4} v^2 \right) (u-v)^2 \left( \ln u^2 + \ln v^2 \right) \total u \total v \\
&\quad= \frac{1}{2 \sqrt{\pi}} \int_{-\infty}^\infty \exp\left( - \frac{1}{4} u^2 \right) (2+u^2) \ln u^2 \total u = 4 (1-\gamma) \eqend{,}
\end{splitequation}
where we made another change of variables $u = t-r$, $v = t+r$ and where $\gamma$ is the Euler--Mascheroni constant. Therefore, for sufficiently large $\alpha$ we have
\begin{equation}
\iint f(x) \ln\abs{ (x-x')^2 } f(x') \total^4 x \total^4 x' \approx - 2 \ln\abs{ \alpha (p-q)^2 } + 4 (1-\gamma) \eqend{,}
\end{equation}
which is negative such that the cutoff in the definition~\eqref{eq:ln_abs_x2_minus_def} of $\ln \abs{ (x-x')^2 }_-$ does not enter, and hence the result \eqref{eq:lemma_minimal_variance} follows.
\end{proof}

Recapitulating, we began with an algebra of test functions $f_{(\mu)}$ defined on a real symplectic space $(H,\sigma)$. On this space, we can consider any suitable continuous topology, for example the one induced by an $L_p$ or a Sobolev norm. Quantizing this symplectic space means the construction of the Weyl algebra $\mathcal{A}$, whose natural topology (the operator norm topology induced by the $C^*$ norm) is however uniformly discrete, and thus too fine and unsuitable to define noncommutative coordinates as operators. Given a quasi-free state $\phi$ on $\mathcal{A}$, we therefore employed the GNS contruction to obtain the Fock space representation $\pi(\mathcal{A})$. In this representation, the coarser strong operator topology allowed us to define the coordinate operators $X^{(\mu)}(f)$, which are indexed by the completion of the algebra of test functions that we started with. In turn, to define the quasi-free state and ensure the positivity condition, we needed to adopt the analog of the Dereziński--Meissner state and employ a Krein space formulation.

\section{The noncommutative spacetime}
\label{sec:nc_spacetime}

Given the quantum theory, we now have to extract the structure of our noncommutative spacetime. Our approach adopts a constructive perspective similar to the Ehlers--Pirani--Schild (EPS) framework~\cite{ehlerspiranischild1972}. The EPS construction provides a foundational axiomatization of spacetime geometry that differs fundamentally from conventional approaches in general relativity. Rather than postulating the metric tensor as a primitive geometric object, the EPS method systematically reconstructs the spacetime structure from empirically accessible observational data: collections of events, light ray trajectories, and worldlines of freely falling particles. From these, a conformal structure and a projective structure are derived and shown to be compatible, which in turn results in a unique affine connection. Finally, the proper time along worldlines of particles and the metric can be obtained from the equations of parallel transport, or experimentally realized by passing Radar signals between neighboring particles.\footnote{We note that it is possible to define relational observables in perturbative quantum gravity in this way~\cite{khavkine2012,bongakhavkine2014}.}

Since our quantization of the noncommutative coordinates~\eqref{eq:coord_commutator} does not involve light rays or particles, we have to be content with somewhat more primitive data. Namely, we intend to extract from the Fock representation $\pi(\mathcal{A})$ of the Weyl algebra and the Dereziński--Meissner state $\omega$ a distance between points and their causal relation. Consider first the classical theory, which is obtained in the limit $\hbar \to 0$, i.e. for $\kappa = 0$. Since the sympletic form $\sigma$~\eqref{eq:sigma_def} vanishes for $\kappa = 0$, in this limit the Weyl algebra becomes commutative, and since $\mu_2$~\eqref{eq:mu12_def} also vanishes, only $\mu_1$ makes a non-vanishing contribution. In particular, from equations~\eqref{eq:phi_wf}, \eqref{eq:dmstate_scalarprod} and \eqref{eq:mu12_def} we obtain
\begin{equation}
\lim_{\hbar \to 0} \omega\left( W(f) \right) = \mathe^{\mathi \mu_1(f)} = \exp\left[ \mathi \int x^\mu f_{(\mu)}(x) \total^4 x \right] \eqend{.}
\end{equation}
Similar to the evaluation functionals $p \colon f \mapsto f(p)$ that are employed to recover the points of a Riemannian manifold from the canonically associated commutative spectral triple, we thus see that we need to consider certain singular $f_{(\mu)}$ to extract the coordinates of individual points from our construction. Namely, the formal limit $f_{(\mu)}(x) \to v_\mu \delta^4(x-p)$ (the localization limit) results in
\begin{equation}
\exp\left[ \mathi \int x^\mu f_{(\mu)}(x) \total^4 x \right] \to \exp\left( \mathi v_\mu p^\mu \right) \eqend{,}
\end{equation}
and by considering the one-parameter family $W(t f)$ and the limit~\eqref{eq:xmu_def} to obtain the coordinate operators themselves, we have
\begin{equation}
\lim_{\hbar \to 0} \omega\left( X(f) \right) \to v_\mu p^\mu \eqend{.}
\end{equation}
That is, we may extract the coordinate of the point $p$ in the direction $v$.

\subsection{Distance functional}
\label{subsec:distance}

To obtain a distance functional, we take the expectation of differences of coordinates. Again in the classical limit $\hbar \to 0$, we compute
\begin{equation}
\lim_{\hbar \to 0} \omega\left( [ X(f) - X(g) ]^2 \right) = \iint x^\mu y^\nu [ f_{(\mu)}(x) - g_{(\mu)}(x) ] [ f_{(\nu)}(y) - g_{(\nu)}(y) ] \total^4 x \total^4 y \eqend{,}
\end{equation}
and thus in the formal limit $f_{(\mu)}(x) \to v_\mu \delta^4(x-p)$ and $g_{(\mu)}(x) \to w_\mu \delta^4(x-q)$ we obtain
\begin{equation}
\lim_{\hbar \to 0} \omega\left( [ X(f) - X(g) ]^2 \right) \to \left( v_\mu p^\mu - w_\mu q^\mu \right)^2 \eqend{.}
\end{equation}
As expected, in the classical limit there are no cross-correlations, and we simply have $\lim_{\hbar \to 0} \omega\left( [ X(f) - X(g) ]^2 \right) = \lim_{\hbar \to 0} \left[ \omega\left( X(f) \right) - \omega\left( X(g) \right) \right]^2$. If we now sum over an orthonormal Lorentzian frame $v_\mu = w_\mu = e_\mu^{(a)}$ with $a = 0,\ldots,3$ and $e_\mu^{(a)} = \delta_\mu^a$, we recover the Minkowski distance
\begin{equation}
\sum_{a=1}^4 \eta_{aa} \left( e_\mu^{(a)} p^\mu - e_\mu^{(a)} q^\mu \right)^2 = - \left( p^0 - q^0 \right)^2 + \sum_{i=1}^3 \left( p^i - q^i \right)^2 = (p-q)^2 = 2 \sigma(p, q) \eqend{,}
\end{equation}
where $\sigma$ is the Synge world function~\cite{synge1960}.

Our viewpoint is thus somewhat dual to the one of spectral triples explained in the introduction, where points are identified with evaluation functionals and the distance is obtained as a supremum over a subset of the algebra~\eqref{eq:spectral_triple_distance}. In contrast, we have a fixed functional $\omega$ on the algebra (the Dereziński--Meissner state), and the points are given by the expectation values of the noncommutative coordinate operators in the localization limit.

Consider now the quantum theory with $\kappa \neq 0$. Considering the one-parameter family $W(t f)$ and the limit~\eqref{eq:xmu_def} to obtain the coordinate operators themselves, we compute
\begin{splitequation}
\omega\left( [ X(f) - X(g) ]^2 \right) &= - \lim_{s, t \to 0} \frac{1}{s t} \omega\left( \left[ W(s f) - W(s g) \right] \left[ W(t f) - W(t g) \right] \right) \\
&= - \lim_{s, t \to 0} \frac{1}{s t} \omega\Bigl( W((s+t) f) + W((s+t) g) \\
&\hspace{4em}- W(s f+t g) \mathe^{- \frac{\mathi}{2} s t \sigma(f, J g)} - W(s g+t f) \mathe^{- \frac{\mathi}{2} s t \sigma(g, J f)} \Bigr) \\
&= - \lim_{s, t \to 0} \frac{1}{s t} \biggl[ \exp\left[ \mathi (s+t) \mu_1(f) - \frac{1}{2} (s+t)^2 \mu_2(f,f) \right] \\
&\hspace{4em}+ \exp\left[ \mathi (s+t) \mu_1(g) - \frac{1}{2} (s+t)^2 \mu_2(g,g) \right] \\
&\hspace{4em}- \exp\left[ \mathi \mu_1(s f+t g) - \frac{1}{2} \mu_2(s f+t g,s f+t g) \right] \mathe^{- \frac{\mathi}{2} s t \sigma(f, J g)} \\
&\hspace{4em}- \exp\left[ \mathi \mu_1(s g+t f) - \frac{1}{2} \mu_2(s g+t f,s g+t f) \right] \mathe^{- \frac{\mathi}{2} s t \sigma(g, J f)} \biggr] \\
&= [ \mu_1(f) - \mu_1(g) ]^2 + \mu_2(f-g,f-g) - \frac{\mathi}{2} \sigma(f, J g) - \frac{\mathi}{2} \sigma(g, J f) \\
&= \iint x^\mu y^\nu [ f_{(\mu)}(x) - g_{(\mu)}(x) ] [ f_{(\nu)}(y) - g_{(\nu)}(y) ] \total^4 x \total^4 y \\
&\quad+ \Delta_{\alpha,\psi}\left( f-g, J (f-g) \right) \eqend{,}
\end{splitequation}
where in the second equality we used the Weyl relations in the Krein space. In the last equality we used the replacements~\eqref{eq:mu12_def} for the Dereziński--Meissner state, as well as the relations
\begin{equation}
\mathi \sigma(f, g) = \Delta_{\alpha,\psi}\left( f, g \right) - \Delta_{\alpha,\psi}\left( g, f \right) \eqend{,} \quad \Delta_{\alpha,\psi}\left( f, J g \right) = \Delta_{\alpha,\psi}\left( J f, g \right) \eqend{,}
\end{equation}
which follow from its definition~\eqref{eq:dmstate_tau} and the one of $J$~\eqref{eq:krein_j_def}.

In the formal localization limit $f_{(\mu)}(x) \to v_\mu \delta^4(x-p)$ and $g_{(\mu)}(x) \to w_\mu \delta^4(x-q)$, this gives
\begin{splitequation}
&\omega\left( [ X(f) - X(g) ]^2 \right) \to \left( v_\mu p^\mu - w_\mu q^\mu \right)^2 + \alpha \kappa^2 ( v_\mu - w_\mu ) \left( \eta^{\mu\nu} + 2 u^\mu u^\nu \right) ( v_\nu - w_\nu ) \\
&\quad+ \frac{\kappa^2}{16 \pi^2} \iint \ln\abs{ (x-x')^2 } \left( v_\mu [ \delta^4(x-p) - \psi(x) ] - w_\mu [ \delta^4(x-q) - \psi(x) ] \right) \left( \eta^{\mu\nu} + 2 u^\mu u^\nu \right) \\
&\qquad\quad\times \left( v_\nu [ \delta^4(x'-p) - \psi(x') ] - w_\nu [ \delta^4(x'-q) - \psi(x') ] \right) \total^4 x \total^4 x' \\
&\quad+ \frac{\kappa^2}{2^8 \alpha \pi^2} \iint \left[ v_\mu \sgn(p^0-t) \Theta\left[ -(p-x)^2 \right] - w_\mu \sgn(q^0-t) \Theta\left[ -(q-x)^2 \right] \right] \left( \eta^{\mu\nu} + 2 u^\mu u^\nu \right) \\
&\qquad\quad\times \left[ v_\nu \delta^4(x-p) \sgn(p^0-t') \Theta\left[ -(p-x')^2 \right] - w_\nu \delta^4(x-q) \sgn(q^0-t') \Theta\left[ -(q-x')^2 \right] \right] \\
&\qquad\quad\times \psi(x) \psi(x') \total^4 x \total^4 x' \eqend{,}
\end{splitequation}
where we used the explicit expressions for the Dereziński--Meissner state~\eqref{eq:dmstate_tau}, the projector $P_\psi$~\eqref{eq:dmstate_proj}, the symplectic form $\sigma$~\eqref{eq:sigma_def} and the involution $J$~\eqref{eq:krein_j_def}. However, this expression suffers from two problems: first, it contains the arbitrary vector $u^\mu$ that we use to define the involution and is thus not Lorentz-covariant, and second, it contains divergent terms such as $\ln\abs{ (p - p)^2 }$. The second problem is a consequence of the unboundedness of the coordinate operators, and tells us that we cannot strictly localize points in the quantum theory. On the other hand, the first problem can be solved by employing instead of $\omega$ the non-positive linear functional $\tau_{\alpha,\psi}$, defined on the Weyl algebra $\mathrm{CCR}(H,\sigma)$ by
\begin{equation}
\label{eq:tau_weyl_def}
\tau_{\alpha,\psi}(W(f)) = \exp\left[ \mathi \int x^\mu f_{(\mu)}(x) \total^4 x - \frac{1}{2} \Delta_{\alpha,\psi}\left( f, f \right) \right] \eqend{.}
\end{equation}
Since we introduced the Krein space only to obtain a positive state, such that it is only an auxiliary construct, this is in fact the physically sensible choice. We thus define
\begin{definition}
Given positive test functions $\chi_p, \chi_q \in \mathbb{C}_0^\infty(\mathbb{R}^4)$ with mean $1$, the distance between them is defined as
\begin{splitequation}
\mathcal{D}_{\alpha,\psi}(\chi_p,\chi_q) &\defeq \eta_{ab} \, \tau_{\alpha,\psi}\left( \left[ X(f^{(a)}) - X(g^{(a)}) \right] \left[ X(f^{(b)}) - X(g^{(b)}) \right] \right) \\
&= \eta_{ab} \iint x^\mu (x')^\nu \left[ f^{(a)}_\mu(x) - g^{(a)}_\mu(x) \right] \left[ f^{(b)}_\nu(x') - g^{(b)}_\nu(x') \right] \total^4 x \total^4 x' \\
&\quad+ \eta_{ab} \, \Delta_{\alpha,\psi}\left( f^{(a)} - g^{(a)}, f^{(b)} - g^{(b)} \right)
\end{splitequation}
with
\begin{equation}
f^{(a)}_\mu(x) = e_\mu^{(a)} \chi_p(x) \eqend{,} \quad g^{(a)}_\mu(x) = e_\mu^{(a)} \chi_q(x) \eqend{,}
\end{equation}
and where $\{ e_\mu^{(a)} \}_{a=1}^4$ is an orthonormal Lorentzian frame.
\end{definition}

We interpret the test functions $\chi_p$ and $\chi_q$ as a measure of the precision with which the points $p$ and $q$ are determined, for example a Gaussian with a small variance. These will depend on the experiment, but as we have seen above, the idealized limit $\chi_p(x) \to \delta^4(x-p)$ where the localization becomes exact is divergent. However, the classical part of the distance functional $\mathcal{D}$ can in fact be exactly localized and then give back the usual Lorentzian distance. On the other hand, the test function $\psi$ and the constant $\alpha$ that enter the Dereziński--Meissner state are a property of the state and do not depend on the experiment. Nevertheless, we may take the limit $\alpha \to \infty$ of the distance function:
\begin{lemma}
The limit $\lim_{\alpha \to \infty} \mathcal{D}_{\alpha,\psi}(\chi_p,\chi_q)$ exists and is independent of $\psi$.
\end{lemma}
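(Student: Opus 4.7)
The plan is to split $\mathcal{D}_{\alpha,\psi}$ into its classical piece---the first double integral in the definition, manifestly independent of $\alpha$ and $\psi$---and the quantum correction $\eta_{ab}\,\Delta_{\alpha,\psi}(h^{(a)}, h^{(b)})$, where I write $h^{(a)} \defeq f^{(a)} - g^{(a)} = e^{(a)}(\chi_p - \chi_q)$. The entire analysis then reduces to a termwise inspection of the four summands in the second form of $\Delta_{\alpha,\psi}$ exhibited in~\eqref{eq:dmstate_tau}.

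The decisive observation is that $\chi_p$ and $\chi_q$ are normalized to have mean one, so
\begin{equation*}
\bar{h}^{(a)}_\mu = e_\mu^{(a)} ( \bar{\chi}_p - \bar{\chi}_q ) = 0 \eqend{.}
\end{equation*}
Two consequences follow immediately. First, the projector collapses, $P_\psi h^{(a)} = h^{(a)}$, so the $\ln\abs{(x-x')^2}_-$ term loses its $\psi$-dependence (and never carried $\alpha$ to begin with). Second, the summand $\alpha \kappa^2 \bar{f}_\mu \eta^{\mu\nu} \bar{g}_\nu$, which is the only piece that could grow with $\alpha$, vanishes identically on the pair $(h^{(a)}, h^{(b)})$.

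It remains to control the two surviving contributions. The antisymmetric term $\frac{\mathi}{2}\sigma(h^{(a)}, h^{(b)})$ is independent of both $\alpha$ and $\psi$, and therefore contributes a fixed constant to the limit. The only remaining $\psi$-dependent piece is
\begin{equation*}
\frac{1}{4\alpha\kappa^2}\,\sigma(h^{(a)},\psi)_\mu\, \eta^{\mu\nu}\,\sigma(h^{(b)},\psi)_\nu \eqend{;}
\end{equation*}
for the fixed compactly supported test functions $\psi$ and $h^{(a)}, h^{(b)}$ each symplectic pairing is a finite numerical constant, so the explicit $1/\alpha$ prefactor forces this contribution to zero as $\alpha \to \infty$. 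Assembling the pieces, the limit exists and equals the classical double integral plus the $\ln\abs{\cdot}_-$ bilinear form plus $\frac{\mathi}{2}\eta_{ab}\sigma(h^{(a)}, h^{(b)})$, a combination that is manifestly independent of $\psi$.

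The main (and in fact only) obstacle is careful bookkeeping: reading off from~\eqref{eq:dmstate_tau} that the dangerous term linear in $\alpha$ is precisely proportional to $\bar{f}_\mu \eta^{\mu\nu} \bar{g}_\nu$, and that the remaining $\psi$-carrying piece acquired its $1/\alpha$ factor from the completion of the square. Once these two facts are identified, the vanishing of $\bar{h}^{(a)}$ and the boundedness of $\sigma$ on compactly supported arguments do all of the work; no further analytic input is needed.
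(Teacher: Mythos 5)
Your proof is correct and follows essentially the same route as the paper: both hinge on $\bar{h}^{(a)}=0$ (which kills the $\alpha\kappa^2\bar f\eta\bar g$ term and makes $P_\psi$ act trivially), and on the explicit $1/\alpha$ prefactor of the only surviving $\psi$-dependent piece. The one small bookkeeping point you miss is that the leftover term $\tfrac{\mathi}{2}\eta_{ab}\,\sigma(h^{(a)},h^{(b)})$ is not merely a harmless constant but actually vanishes --- $\sigma$ is antisymmetric while $\eta_{ab}$ is symmetric (equivalently, the integrand $[\chi_p(x)-\chi_q(x)][\chi_p(x')-\chi_q(x')]\sgn(t-t')\,\Theta[-(x-x')^2]$ is odd under $x\leftrightarrow x'$) --- which is why the paper's stated limit has only two terms; this does not affect the lemma's conclusion, but you should drop the spurious term from the final expression.
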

\begin{proof}
Since the test functions $\chi_p$ and $\chi_q$ have mean 1, we obtain $\left[ P_\psi\left( f^{(a)} - g^{(a)} \right) \right]_\mu(x) = f^{(a)}_\mu(x) - g^{(a)}_\mu(x)$ and thus
\begin{splitequation}
\mathcal{D}_{\alpha,\psi}(\chi_p,\chi_q) &= \eta_{ab} \iint x^\mu (x')^\nu \left[ f^{(a)}_\mu(x) - g^{(a)}_\mu(x) \right] \left[ f^{(b)}_\nu(x') - g^{(b)}_\nu(x') \right] \total^4 x \total^4 x' \\
&\quad- \frac{\kappa^2}{16 \pi^2} \eta_{ab} \iint \ln\abs{ (x-x')^2 }_- \left[ f^{(a)}_\mu(x) - g^{(a)}_\mu(x) \right] \eta^{\mu\nu} \left[ f^{(b)}_\nu(x') - g^{(b)}_\nu(x') \right] \total^4 x \total^4 x' \\
&\quad+ \frac{1}{4 \alpha \kappa^2} \eta_{ab} \sigma\left( f^{(a)} - g^{(a)}, \psi \right)_\mu\, \eta^{\mu\nu} \sigma\left( f^{(b)} - g^{(b)}, \psi \right)_\nu \eqend{,}
\end{splitequation}
and in the limit $\alpha \to \infty$ the last term vanishes.
\end{proof}

We thus obtain a Lorentzian noncommutative distance functional:
\begin{definition}
Given positive test functions $\chi_p, \chi_q \in \mathbb{C}_0^\infty(\mathbb{R}^4)$ with mean $1$, the (Lorentzian) noncommutative distance between them is defined as
\begin{splitequation}
\mathcal{D}(\chi_p, \chi_q) &\defeq \lim_{\alpha \to \infty} \mathcal{D}_{\alpha,\psi}(\chi_p,\chi_q) \\
&= \eta_{\mu\nu} \iint x^\mu (x')^\nu \left[ \chi_p(x) - \chi_q(x) \right] \left[ \chi_p(x') - \chi_q(x') \right] \total^4 x \total^4 x' \\
&\quad- \frac{\kappa^2}{4 \pi^2} \iint \ln\abs{ (x-x')^2 }_- \left[ \chi_p(x) - \chi_q(x) \right] \left[ \chi_p(x') - \chi_q(x') \right] \total^4 x \total^4 x' \eqend{.}
\end{splitequation}
\end{definition}
To obtain the second equality, we used that $\eta_{ab} e_\mu^{(a)} e_\nu^{(b)} \eta^{\mu\nu} = 4$. This distance functional reduces to the classical Minkowski distance $(p-q)^2$ or Synge world function~\cite{synge1960} $\sigma(p,q) = \frac{1}{2} (p-q)^2$ as $\hbar \to 0$ (which engenders $\kappa \to 0$) and in the localization limit $\chi_p(x) \to \delta^4(x-p)$, but in the quantum theory is supplemented by corrections of order $\kappa^2 \sim \ell_\mathrm{Pl}^2$, reflecting the underlying noncommutative structure. Moreover, while the classical Minkowski distance can have either sign, the quantum correction is always positive because $\ln\abs{ (x-x')^2 }_-$ is a negative semidefinite kernel~\eqref{eq:ln_abs_x2_minus_def}. Our result is thus reminiscent of a minimal or zero-point length of the form proposed in~\cite{padmanabhan1987,padmanabhan1997,kothawala2013,kothawala2023}, namely $(p-q)^2 \to (p-q)^2 + L_0^2$ for some fixed length scale $L_0$. However, in our proposal the minimal length depends on the points $p$ and $q$, as well as on the precision with which they are determined, and is thus dynamical. We may estimate it using Lemma~\ref{lemma:minimal_variance}, choosing a Gaussian
\begin{equation}
\chi_p(x) = \left( \frac{2 c}{\pi \ell_\mathrm{Pl}^2} \right)^2 \exp\left( - 2 c \sum_{\mu=1}^4 \frac{(x^\mu-p^\mu)^2}{\ell_\mathrm{Pl}^2} \right)
\end{equation}
with variance $\ell_\mathrm{Pl}^2/c$ proportional to the squared Planck length. In the limit $\ell_\mathrm{Pl} \to 0$, we obtain from equation \eqref{eq:lemma_minimal_variance}
\begin{splitequation}
\iint \ln\abs{ (x-x')^2 }_- \left[ \chi_p(x) - \chi_q(x) \right] \left[ \chi_p(x') - \chi_q(x') \right] \total^4 x \total^4 x' \to - 2 \ln\left[ \frac{2 c}{\mathe^{2 (1-\gamma)}} \frac{\abs{ (p-q)^2 }}{\ell_\mathrm{Pl}^2} \right] \eqend{,}
\end{splitequation}
and the distance functional including the leading quantum corrections thus reads
\begin{equation}
\mathcal{D}(\chi_p, \chi_q) \approx (p-q)^2 + \frac{\kappa^2}{2 \pi^2} \ln\left[ \frac{\abs{ (p-q)^2 }}{2 \ell_\mathrm{Pl}^2} \right] = 2 \sigma(p,q) + \frac{8 \ell_\mathrm{Pl}^2}{\pi} \ln\left[ \frac{\abs{ \sigma(p,q) }}{\ell_\mathrm{Pl}^2} \right] \eqend{,}
\end{equation}
where we chose $c = \mathe^{2 (1-\gamma)}/4$ and used that $\kappa^2 = 16 \pi \ell_\mathrm{Pl}^2$.

\subsection{Causal Structure}
\label{subsec:causal}

The causal relation between points is simpler to obtain, and based on the symplectic structure $\sigma$~\eqref{eq:sigma_def}. We first note that in the formal limit $f_{(\mu)}(x) \to v_\mu \delta^4(x-p)$ and $g_{(\mu)}(x) \to w_\mu \delta^4(x-q)$ we obtain
\begin{equation}
\sigma(f,g) \to - \frac{\kappa^2}{8 \pi} \eta^{\mu\nu} v_\mu w_\nu \sgn(p^0-q^0) \Theta\left[ -(p-q)^2 \right] \eqend{,}
\end{equation}
which vanishes if $p$ and $q$ are spacelike separated, and otherwise is constant with the sign depending on $\eta^{\mu\nu} v_\mu w_\nu$ and whether $p$ lies in the future of $q$ or the other way around. Summing again over an orthonormal Lorentzian frame $v_\mu = w_\mu = e_\mu^{(a)}$ with $a = 0,\ldots,3$ and $e_\mu^{(a)} = \delta_\mu^a$, we obtain
\begin{equation}
\label{eq:causal_sum_sigma}
- \frac{\kappa^2}{8 \pi} \sum_{a=1}^4 \eta_{aa} \eta^{\mu\nu} e_\mu^{(a)} e_\nu^{(a)} \sgn(p^0-q^0) \Theta\left[ -(p-q)^2 \right] = - \frac{\kappa^2}{2 \pi} \sgn(p^0-q^0) \Theta\left[ -(p-q)^2 \right] \eqend{,}
\end{equation}
whose sign now only depends on the causal relation between $p$ and $q$.

As for the distance, the correct functional to obtain the causal relation in the quantum theory is $\tau$ and not the state $\omega$, and so we define
\begin{definition}
Given positive test functions $\chi_p, \chi_q \in \mathbb{C}_0^\infty(\mathbb{R}^4)$ with mean $1$, their causal relation is encoded in the functional
\begin{equation}
\mathcal{C}(\chi_p,\chi_q) \defeq - \frac{4 \pi \mathi}{\kappa^2} \eta_{ab} \ln \tau\left( W(f^{(a)}) W(g^{(b)}) W(- f^{(a)} - g^{(b)}) \right) = - \frac{2 \pi}{\kappa^2} \eta_{ab} \, \sigma(f^{(a)}, g^{(b)})
\end{equation}
with
\begin{equation}
f^{(a)}_\mu(x) = e_\mu^{(a)} \chi_p(x) \eqend{,} \quad g^{(a)}_\mu(x) = e_\mu^{(a)} \chi_q(x) \eqend{,}
\end{equation}
and where $\{ e_\mu^{(a)} \}_{a=1}^4$ is an orthonormal Lorentzian frame.

We say that $p$ and $q$ are spacelike separated if $\mathcal{C}(\chi_p,\chi_q) = 0$, that $p$ is in the future of $q$ if $\mathcal{C}(\chi_p,\chi_q) > 0$ and that $p$ is in the past of $q$ if $\mathcal{C}(\chi_p,\chi_q) < 0$.
\end{definition}
Since the support of $\chi_p$ might intersect the support of $\chi_q$ even if $p$ and $q$ are distinct, the causal relation between $p$ and $q$ is ``fuzzy''. Namely, we have
\begin{proposition}
In general, it holds that $\mathcal{C}(\chi_p,\chi_q) \in [-1,1]$. In the idealized limit $\chi_p(x) \to \delta^4(x-p)$ where the localization becomes exact we obtain $\mathcal{C}(\chi_p,\chi_q) \in \{-1,0,1\}$.
\end{proposition}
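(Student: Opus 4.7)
The plan is to reduce $\mathcal{C}(\chi_p,\chi_q)$ to a single double integral of the test functions against a bounded kernel, after which both claims follow from elementary estimates and the properties of approximate identities. The first step is to verify that the two expressions in the definition of $\mathcal{C}$ agree: using the Weyl relations repeatedly together with the antisymmetry $\sigma(h,-h)=0$ yields
\begin{equation*}
W(f^{(a)})\, W(g^{(b)})\, W(-f^{(a)}-g^{(b)}) = \exp\left[ -\frac{\mathi}{2}\, \sigma(f^{(a)}, g^{(b)}) \right] \1,
\end{equation*}
so that $\ln \tau(\cdots) = -\tfrac{\mathi}{2}\, \sigma(f^{(a)}, g^{(b)})$ (noting $\tau(\1)=1$) and the prefactor $-4\pi\mathi/\kappa^2$ combines with $-\mathi/2$ to give exactly $-2\pi/\kappa^2$.

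Next, I would substitute $f^{(a)}_\mu(x) = e^{(a)}_\mu \chi_p(x)$ and $g^{(b)}_\nu(x') = e^{(b)}_\nu \chi_q(x')$ into $\sigma$ given by~\eqref{eq:sigma_def}, and let the frame indices contract via the completeness relation $\eta_{ab}\, e^{(a)}_\mu e^{(b)}_\nu = \eta_{\mu\nu}$ of an orthonormal Lorentzian frame. This produces $\eta_{ab}\, \eta^{\mu\nu}\, e^{(a)}_\mu e^{(b)}_\nu = 4$, and combining with the prefactor $-\kappa^2/(8\pi)$ of $\sigma$ cancels all powers of $\kappa$, leaving the clean formula
\begin{equation*}
\mathcal{C}(\chi_p,\chi_q) = \iint \chi_p(x)\, \chi_q(x')\, K(x,x')\, \total^4 x\, \total^4 x', \qquad K(x,x') \defeq \sgn(t-t')\, \Theta\bigl[ -(x-x')^2 \bigr].
\end{equation*}

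The first claim is then immediate: the kernel $K$ takes values in $\{-1, 0, +1\}$ so $\abs{K} \leq 1$ pointwise, and since $\chi_p, \chi_q \geq 0$ have unit mean, Fubini's theorem and the triangle inequality give $\abs{\mathcal{C}(\chi_p,\chi_q)} \leq \iint \chi_p(x)\, \chi_q(x')\, \total^4 x\, \total^4 x' = 1$. For the second claim, in the localization limit $\chi_p \to \delta^4(\cdot - p)$ and $\chi_q \to \delta^4(\cdot - q)$ the integral collapses to $K(p,q)$, which equals $+1$ when $p$ lies in the timelike future of $q$, $-1$ when $p$ lies in its timelike past, and $0$ for spacelike separation.

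The main obstacle is very mild and purely analytic: $K$ is discontinuous across the light cone, so the localization limit requires that the shrinking supports of $\chi_p, \chi_q$ eventually lie in open regions where $K$ is continuous around $p$ and $q$. For generic points with $(p-q)^2 \neq 0$ this is automatic, and even when $p=q$ the antisymmetry of $\sgn(t-t')$ forces the integral to vanish. The only genuinely ambiguous case is $p \neq q$ with $(p-q)^2=0$, and the statement of the proposition should be read with this implicit regularity assumption, consistent with the phrase ``idealized limit where the localization becomes exact.''
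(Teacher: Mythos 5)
Your proof is correct and follows essentially the same route as the paper: reduce $\mathcal{C}$ to $\iint \chi_p(x)\chi_q(x')\sgn(t-t')\Theta[-(x-x')^2]\,\total^4 x\,\total^4 x'$ via the frame contraction $\eta_{ab}\eta^{\mu\nu}e^{(a)}_\mu e^{(b)}_\nu=4$, bound the kernel by $1$ and use positivity and unit mass, then collapse the delta-function limit. The only additions you make are the explicit Weyl-relation computation showing the two forms of the definition agree (which the paper treats as part of the definition) and the caveat about the kernel's discontinuity on the null cone, both of which are fair elaborations rather than a different argument.
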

\begin{proof}
We compute
\begin{splitequation}
\mathcal{C}(\chi_p,\chi_q) &= \frac{1}{4} \iint \eta_{ab} \eta^{\mu\nu} f^{(a)}_{(\mu)}(x) g^{(b)}_{(\nu)}(x') \sgn(t-t') \Theta\left[ -(x-x')^2 \right] \total^4 x \total^4 x' \\
&= \iint \chi_p(x) \chi_q(x') \sgn(t-t') \Theta\left[ -(x-x')^2 \right] \total^4 x \total^4 x' \eqend{,}
\end{splitequation}
where we used that $\eta_{ab} \eta^{\mu\nu} e_\mu^{(a)} e_\nu^{(b)} = 4$, and from this
\begin{splitequation}
\abs{ \mathcal{C}(\chi_p,\chi_q) } &\leq \iint \abs{ \chi_p(x) \chi_q(x') \sgn(t-t') \Theta\left[ -(x-x')^2 \right] } \total^4 x \total^4 x' \\
&\leq \iint \chi_p(x) \chi_q(x') \total^4 x \total^4 x' = 1 \eqend{,}
\end{splitequation}
since $\chi_p(x)$ and $\chi_q(x)$ are positive functions with mean $1$. In the limit $\chi_p(x) \to \delta^4(x-p)$, it easily follows that
\begin{equation}
\mathcal{C}(\chi_p,\chi_q) \to \sgn(p^0-q^0) \Theta\left[ -(p-q)^2 \right] \in \{ -1, 0, 1 \} \eqend{.}
\end{equation}
\end{proof}
Values of $\mathcal{C}(\chi_p,\chi_q)$ between $0$ and $1$ can thus be interpreted as $p$ being partially in the future of $q$, namely some parts of the support of $\chi_p$ could lie spacelike to the support of $\chi_q$, or even in the causal past. On the other hand, if $\mathcal{C}(\chi_p,\chi_q) = 1$ we know that every $x \in \operatorname{supp} \chi_p$ lies in the causal future of every $x' \in \operatorname{supp} \chi_q$, and we may say that $p$ is strictly in the future of $q$ (with the given precision determined by $\chi_p$ and $\chi_q$).

In contrast to the DFR spacetime~\eqref{eq:dfr} with fixed noncommutativity, our proposal thus respects a weakened form of causality: a causal structure still exists, but can be fuzzy, and strict causality is only recovered in an idealized limit where one can localize points strictly. This is analogous to results in other approaches, for example $\kappa$-Minkowski spacetime~\cite{nevesfarinacougopinto2010,mercatisergola2018} (see also Refs.~\cite{francowallet2022,francohersentmariswallet2023} for discussions on causality in $\kappa$-Minkowski spacetime).

\section{Discussion and outlook}
\label{sec:discussion}

In this work, we presented a quantum theory of a noncommutative spacetime that --- in contrast to some existing approaches --- preserves the Lorentzian distance and causal relations of Minkowski space in a fuzzy way, with Planck-scale corrections emerging naturally from the localization process intrinsic to the measurement of coordinates. We started with the construction of a Weyl algebra whose generators correspond to the noncommutative coordinates, based on the sympletic form derived from the Lorentz-invariant commutator that we obtained in previous work. We showed how to construct a state on this algebra, generalizing the Dereziński--Meissner construction of a quasi-free state for the massless two-dimensional scalar field. To ensure positivity, we employed a Krein space formulation, where the positive-definite scalar product that determines the state is used in the GNS construction of an auxiliary Hilbert space that is required for the quantization. This algebra encodes our noncommutative spacetime.

Adapting the Ehlers--Pirani--Schild philosophy to our setting, we defined a natural distance functional $\mathcal{D}(\chi_p, \chi_q)$ on the algebra, whose finiteness requires the introduction of suitably localized test functions $\chi_p$ that correspond to the finite resolution of the coordinate measurement of the point $p$, instead of the sharply localized points in the classical theory. After taking the classical limit, we may also perform the limit of strict localization, in which we recover the standard Minkowski distance (or Synge world function) from our distance functional. On the other hand, trying to strictly localize points in the quantum theory shows that there a exists an effective, dynamical minimal distance between them. Namely, choosing for the test functions $\chi_p$ a normalized Gaussian with variance proportional to the squared Planck length, in the limit $\ell_\mathrm{Pl} \to 0$ we obtain a quantum correction to the Synge world function of the form
\begin{equation}
\sigma(p,q) \to \sigma(p,q) + \frac{4 \ell_\mathrm{Pl}^2}{\pi} \ln\left[ \frac{\abs{ \sigma(p,q) }}{\ell_\mathrm{Pl}^2} \right] \eqend{,}
\end{equation}
which is reminiscent of earlier proposals for a minimal length. However, the scale and detailed form of the correction depends on the chosen localization profile and the points being probed, rather than being imposed as a rigid universal cutoff. In this sense, the quantized coordinate system naturally exhibits fuzzy minimal length effects as quantum corrections to distance measurements, and the continuum geometry is recovered in the limit $\ell_\mathrm{Pl} \to 0$.

On the other hand, the causal structure is extracted directly from the symplectic form of the Weyl algebra as a causal functional $\mathcal{C}(\chi_p,\chi_q)$, evaluated on the same localized test functions $\chi_p$ as the distance functional. In the strict localization limit, $\mathcal{C}$ reduces to the usual division between spacelike ($\mathcal{C} = 0$) and timelike ($\mathcal{C} = \pm 1$) separation. In this limit, the noncommutativity of the coordinates also vanishes outside the light cone and becomes constant for timelike separations, such that microcausality is respected. In the quantum theory (for finite-width localizations) instead, the causal relation becomes intrinsically fuzzy: one obtains a continuous measure $\mathcal{C} \in [-1,1]$ of ``how much'' one measured point lies in the future of another, with values strictly between $-1$ and $1$ reflecting partial overlap of spacelike and timelike regions in the measurement. Crucially, however, this doesn't lead to acausal behavior: the fuzziness is induced by the measurement, while the algebra elements evolve causally, with their commutators vanishing for spacelike separations. The underlying fundamental causal structure is thus stable under quantization, but the measured causal relations are fuzzy.

From a broader perspective, our results provide a concrete, mathematically controlled realization of the idea that classical spacetime geometry can emerge from more fundamental quantum algebraic data. We have shown explicitly how a Lorentzian distance and causal order can be reconstructed from an abstract noncommutative algebra and a quantum state defined on this algebra, with the Planck-scale structure entering only as higher-order corrections to otherwise classical quantities. At the same time, our construction offers practical tools for investigating quantum gravitational effects on spacetime geometry: the Lorentzian noncommutative distance functional, the fuzzy causal relation, and the operator realization of noncommutative coordinates all admit systematic expansions in the deformation parameter $\kappa$, which is proportional to the Planck length $\ell_\mathrm{Pl}$. This makes the size and qualitative features of quantum corrections transparent, and provides a concrete model for comparison with phenomenological studies, such as~\cite{scardiglicasadio2014,buoninfanteetal2022,lambiasescardigli2022,scardiglilambiase2022,casadiofengkuntzscardigli2023}.

Moreover, our results might lead the way for a resolution of the problem of spacetime topology discussed in~\cite{papadopoulosscardigli2018,papadopoulos2021,finstermuchpapadopoulos2021}. While in many cases, the topology that one equips a spacetime with is not a problem that needs an answer, and the underlying manifold topology can be used even when one studies the convergence of causal curves, some basic questions relating the topology and the spacetime structure cannot be answered in fully. These questions include in particular the inability of the standard manifold topology to encapsulate the causal structure: What is the relationship between the light cone and the manifold topology? How is the Lorentz group related to the group of homeomorphisms of the spacetime under the manifold topology? These questions have appeared as early as the first singularity theorem of Penrose~\cite{penrose1965}, and have been further studied by Zeeman~\cite{zeeman1967}, Göbel~\cite{goebel1976} and Hawking and McCarthy~\cite{hawkingkingmccarthy1976}. While the path topology that was introduced in the latter work encapsulates all possible structural levels of relativistic spacetimes (causal, conformal, differential, and so on), it does not satisfy the Limit Curve Theorem. However, this theorem is required to prove most (if not all) of the singularity theorems and their extensions, see for example~\cite{penrose1965,tipler1978,senovilla1998,senovillagarfinkle2015,papadopoulos2021,freivogelkontoukrommydas2022,fewsterkontou2022,grafetal2025}, and thus also the path topology cannot be the final answer to these questions. The topologisation issue was brought again to the surface by Bykov and Minguzzi in~\cite{bykovminguzzi2025}, where they characterised global hyperbolicity outside of the frame of the manifold topology.

Since the question \emph{Which is the most appropriate topology for a spacetime?} has not been answered fully yet, and this problem remains open in relativity theory, we believe that for its solution it could be useful to explore new avenues, and in particular see which structures remain at the level of a quantum theory of spacetime. The aforementioned path topology is based in its construction on timelike paths, which do not exist anymore at the quantum level. What remains (at least in our proposal) is the causal relation $\mathcal{C}(\chi_p,\chi_q) \in [-1,1]$ between points $p$ and $q$ (with the measurement resolution described by the test function $\chi_p$), which is in general a fuzzy order relation where $p$ might be partially in the future of $q$. Only in the strict localization limit where $\chi_p(x) \to \delta^4(x-p)$ one recovers the strict causal ordering, where $\mathcal{C}(\chi_p,\chi_q) = 0$ if $p$ and $q$ are spacelike separated, $\mathcal{C}(\chi_p,\chi_q) = 1$ if $p$ is in the future of $q$ and $\mathcal{C}(\chi_p,\chi_q) = -1$ if $p$ is in the past of $q$, and thus an order relation between points. From the fuzzy order in the quantum theory, one could construct a fuzzy topology such as the $I$-fuzzy Alexandrov topology~\cite{bodenhofer2007,jinming2007}, which can be a starting point for a classification of appropriate spacetime topologies at the quantum level. As a first step, one then would like to recover the classical Alexandrov (interval) topology of the classical Minkowski spacetime. A different avenue would be a connection to Lorentzian metric spaces studied by Minguzzi, Suhr and collaborators~\cite{minguzzisuhr2024,bykovminguzzisuhr2025,minguzzi2025}, which defines a spacetime starting from the Lorentzian distance. In the quantum theory, this is extended by our distance functional $\mathcal{D}(\chi_p,\chi_q)$, which in the combined classical and strict localization limit reduces to the Lorentzian distance in Minkowski spacetime. We leave the exact construction of a fuzzy topology and its classical limit, as well as the relation to Lorentzian metric spaces, for future work.

\appendix

\section*{Acknowledgements}

\parindent=0pt
MBF thanks Leonardo Sangaletti for discussions. AM is grateful to Rainer Verch for stimulating discussions concerning the Ehlers--Pirani--Schild framework.

\section*{Data Availability Statement}

Data sharing is not applicable to this article as no new data were created or analyzed in this study.

\bibliographystyle{elsarticle-num}
\bibliography{literature}

\begin{thebibliography}{10}
\expandafter\ifx\csname url\endcsname\relax
  \def\url#1{\texttt{#1}}\fi
\expandafter\ifx\csname urlprefix\endcsname\relax\def\urlprefix{URL }\fi
\expandafter\ifx\csname href\endcsname\relax
  \def\href#1#2{#2} \def\path#1{#1}\fi

\bibitem{connes1994}
A.~Connes, {Noncommutative Geometry}, Academic Press, San Diego, California, USA, 1994.

\bibitem{LNCG1}
G.~N. Parfionov, R.~R. Zapatrin, {Connes duality in pseudo-Riemannian geometry}, J. Math. Phys. 41 (2000) 7122--7128.
\newblock \href {http://arxiv.org/abs/gr-qc/9803090} {\path{arXiv:gr-qc/9803090}}, \href {https://doi.org/10.1063/1.1285757} {\path{doi:10.1063/1.1285757}}.

\bibitem{LNCG2}
V.~Moretti, {Aspects of noncommutative Lorentzian geometry for globally hyperbolic spacetimes}, Rev. Math. Phys. 15 (2003) 1171--1217.
\newblock \href {http://arxiv.org/abs/gr-qc/0203095} {\path{arXiv:gr-qc/0203095}}, \href {https://doi.org/10.1142/S0129055X03001886} {\path{doi:10.1142/S0129055X03001886}}.

\bibitem{LNCG3}
K.~v.~d. Dungen, M.~Paschke, A.~Rennie, {Pseudo-Riemannian spectral triples and the harmonic oscillator}, J. Geom. Phys. 73 (2013) 37--55.
\newblock \href {http://arxiv.org/abs/1207.2112} {\path{arXiv:1207.2112}}, \href {https://doi.org/10.1016/j.geomphys.2013.04.011} {\path{doi:10.1016/j.geomphys.2013.04.011}}.

\bibitem{maresca2025}
E.~Maresca, {Noncommutative Geometry and Spacetime: A Historical Reconstruction}, J. Phys. Conf. Ser. 2948~(1) (2025) 012011.
\newblock \href {https://doi.org/10.1088/1742-6596/2948/1/012011} {\path{doi:10.1088/1742-6596/2948/1/012011}}.

\bibitem{NC1}
H.~Quevedo, M.~N. Quevedo, {Quasi-Homogeneous Black Hole Thermodynamics in Non-Commutative Geometry}, Universe 11~(3) (2025) 79.
\newblock \href {https://doi.org/10.3390/universe11030079} {\path{doi:10.3390/universe11030079}}.

\bibitem{NC2}
H.-L. Chen, B.-S. Lin, {Curvature, area and Gauss--Bonnet formula of the Moyal sphere}, J. Math. Phys. 66~(8) (2025) 083510.
\newblock \href {http://arxiv.org/abs/2412.20483} {\path{arXiv:2412.20483}}, \href {https://doi.org/10.1063/5.0255409} {\path{doi:10.1063/5.0255409}}.

\bibitem{NC3}
R.-B. Wang, S.-J. Ma, J.-B. Deng, X.-R. Hu, {Estimating the strength of Lorentzian distribution in noncommutative geometry by solar system tests}, Mod. Phys. Lett. A 40~(13n14) (2025) 2550034.
\newblock \href {http://arxiv.org/abs/2411.06628} {\path{arXiv:2411.06628}}, \href {https://doi.org/10.1142/S0217732325500348} {\path{doi:10.1142/S0217732325500348}}.

\bibitem{NC4}
R.-B. Wang, S.-J. Ma, L.~You, J.-B. Deng, X.-R. Hu, {Thermodynamics of Schwarzschild-AdS black hole in non-commutative geometry}, Chin. Phys. C 49~(6) (2025) 065101.
\newblock \href {http://arxiv.org/abs/2410.03650} {\path{arXiv:2410.03650}}, \href {https://doi.org/10.1088/1674-1137/adbacd} {\path{doi:10.1088/1674-1137/adbacd}}.

\bibitem{NC5}
S.~Zaiem, F.~Z. Bara, M.~A. Larbei, {Non-commutative geometry and thermodynamics of the Schwarzschild-AdS black hole} (11 2025).
\newblock \href {http://arxiv.org/abs/2511.09399} {\path{arXiv:2511.09399}}.

\bibitem{NC6}
A.~J. Bruce, {Foundations of Noncommutative Carrollian Geometry via Lie-Rinehart Pairs} (10 2025).
\newblock \href {http://arxiv.org/abs/2510.19458} {\path{arXiv:2510.19458}}.

\bibitem{NC7}
N.~Heidari, H.~Hassanabadi, A.~A.~A. Filho, K.~J., {Exploring non-commutativity as a perturbation in the Schwarzschild black hole: quasinormal modes, scattering, and shadows}, Eur. Phys. J. C 84~(6) (2024) 566.
\newblock \href {http://arxiv.org/abs/2308.03284} {\path{arXiv:2308.03284}}, \href {https://doi.org/10.1140/epjc/s10052-024-12889-1} {\path{doi:10.1140/epjc/s10052-024-12889-1}}.

\bibitem{NC8}
P.~Nandi, F.~G. Scholtz, {The hidden Lorentz covariance of quantum mechanics}, Annals Phys. 464 (2024) 169643.
\newblock \href {http://arxiv.org/abs/2312.15750} {\path{arXiv:2312.15750}}, \href {https://doi.org/10.1016/j.aop.2024.169643} {\path{doi:10.1016/j.aop.2024.169643}}.

\bibitem{connesmarcolli2008}
A.~Connes, M.~Marcolli, {Noncommutative Geometry, Quantum Fields and Motives}, Vol.~55 of Colloquium Publications, AMS, Providence, Rhode Island, USA, 2008.

\bibitem{ConnesDistance}
A.~Connes, {Noncommutative geometry and reality}, J. Math. Phys. 36 (1995) 6194--6231.
\newblock \href {https://doi.org/10.1063/1.531241} {\path{doi:10.1063/1.531241}}.

\bibitem{graciabondiavarillyfigueroa2001}
J.~M. Gracia-Bondía, J.~C. Várilly, H.~Figueroa, {Elements of Noncommutative Geometry}, Birkhäuser Advanced Texts Basler Lehrbücher, Birkhäuser, Boston, Massachusetts, USA, 2000.
\newblock \href {https://doi.org/10.1007/978-1-4612-0005-5} {\path{doi:10.1007/978-1-4612-0005-5}}.

\bibitem{franco2010}
N.~Franco, {Global Eikonal Condition for Lorentzian Distance Function in Noncommutative Geometry}, SIGMA 6 (2010) 064.
\newblock \href {http://arxiv.org/abs/1003.5651} {\path{arXiv:1003.5651}}, \href {https://doi.org/10.3842/SIGMA.2010.064} {\path{doi:10.3842/SIGMA.2010.064}}.

\bibitem{rieffel1999}
M.~A. Rieffel, {Metrics on state spaces}, Doc. Math. 4 (1999) 559--600.
\newblock \href {https://doi.org/10.4171/DM/68} {\path{doi:10.4171/DM/68}}.

\bibitem{martinetti2018}
P.~Martinetti, {Connes distance and optimal transport}, J. Phys. Conf. Ser. 968~(1) (2018) 012007.
\newblock \href {http://arxiv.org/abs/1803.07538} {\path{arXiv:1803.07538}}, \href {https://doi.org/10.1088/1742-6596/968/1/012007} {\path{doi:10.1088/1742-6596/968/1/012007}}.

\bibitem{dandreamartinetti2021}
F.~D'Andrea, P.~Martinetti, {A dual formula for the spectral distance in noncommutative geometry}, J. Geom. Phys. 159 (2021) 103920.
\newblock \href {http://arxiv.org/abs/1807.06935} {\path{arXiv:1807.06935}}, \href {https://doi.org/10.1016/j.geomphys.2020.103920} {\path{doi:10.1016/j.geomphys.2020.103920}}.

\bibitem{GelfandNaimark}
I.~Gelfand, M.~Neumark, \href{https://gdz.sub.uni-goettingen.de/download/pdf/PPN477674380_0054/LOG_0024.pdf}{{On the imbedding of normed rings into the ring of operators in Hilbert space}}, {\begingroup\fontencoding{T2A}\selectfont Математический сборник\endgroup} [Matematiceskij sbornik] 12 (54)~(2) (1943) 197--213.
\newline\urlprefix\url{https://gdz.sub.uni-goettingen.de/download/pdf/PPN477674380_0054/LOG_0024.pdf}

\bibitem{LNCG4}
N.~Franco, {Temporal Lorentzian Spectral Triples}, Rev. Math. Phys. 26~(08) (2014) 1430007.
\newblock \href {http://arxiv.org/abs/1210.6575} {\path{arXiv:1210.6575}}, \href {https://doi.org/10.1142/S0129055X14300076} {\path{doi:10.1142/S0129055X14300076}}.

\bibitem{LNCG5}
M.~Eckstein, N.~Franco, {Causal structure for noncommutative geometry}, PoS FFP14 (2016) 138.
\newblock \href {https://doi.org/10.22323/1.224.0138} {\path{doi:10.22323/1.224.0138}}.

\bibitem{LNCG6}
N.~Franco, J.-C. Wallet, {Metrics and causality on Moyal planes} (2016).
\newblock \href {http://arxiv.org/abs/1507.06559} {\path{arXiv:1507.06559}}, \href {https://doi.org/10.1090/conm/676/13610} {\path{doi:10.1090/conm/676/13610}}.

\bibitem{LNCG7}
N.~Franco, {The Lorentzian distance formula in noncommutative geometry} (2018).
\newblock \href {http://arxiv.org/abs/1710.10959} {\path{arXiv:1710.10959}}, \href {https://doi.org/10.1088/1742-6596/968/1/012005} {\path{doi:10.1088/1742-6596/968/1/012005}}.

\bibitem{LNCG8}
E.~Minguzzi, {Causality theory for closed cone structures with applications}, Rev. Math. Phys. 31~(05) (2019) 1930001.
\newblock \href {http://arxiv.org/abs/1709.06494} {\path{arXiv:1709.06494}}, \href {https://doi.org/10.1142/S0129055X19300012} {\path{doi:10.1142/S0129055X19300012}}.

\bibitem{LNCG9}
A.~Devastato, S.~Farnsworth, F.~Lizzi, P.~Martinetti, {Lorentz signature and twisted spectral triples}, JHEP 03 (2018) 089.
\newblock \href {http://arxiv.org/abs/1710.04965} {\path{arXiv:1710.04965}}, \href {https://doi.org/10.1007/JHEP03(2018)089} {\path{doi:10.1007/JHEP03(2018)089}}.

\bibitem{MinguzziTopology}
E.~Minguzzi, {Lorentzian causality theory}, Living Rev. Rel. 22~(1) (2019) 3.
\newblock \href {https://doi.org/10.1007/s41114-019-0019-x} {\path{doi:10.1007/s41114-019-0019-x}}.

\bibitem{DFR1995}
S.~Doplicher, K.~Fredenhagen, J.~E. Roberts, {The quantum structure of spacetime at the Planck scale and quantum fields}, Commun. Math. Phys. 172 (1995) 187--220.
\newblock \href {http://arxiv.org/abs/hep-th/0303037} {\path{arXiv:hep-th/0303037}}, \href {https://doi.org/10.1007/BF02104515} {\path{doi:10.1007/BF02104515}}.

\bibitem{chaichiankulishnishijimatureanu2004}
M.~Chaichian, P.~P. Kulish, K.~Nishijima, A.~Tureanu, {On a Lorentz-invariant interpretation of noncommutative space-time and its implications on noncommutative QFT}, Phys. Lett. B 604 (2004) 98--102.
\newblock \href {http://arxiv.org/abs/hep-th/0408069} {\path{arXiv:hep-th/0408069}}, \href {https://doi.org/10.1016/j.physletb.2004.10.045} {\path{doi:10.1016/j.physletb.2004.10.045}}.

\bibitem{froebmuchpapadopoulos2023a}
M.~B. Fr{\"o}b, A.~Much, K.~Papadopoulos, {Noncommutative geometry from perturbative quantum gravity}, Phys. Rev. D 107~(6) (2023) 064041.
\newblock \href {http://arxiv.org/abs/2207.03345} {\path{arXiv:2207.03345}}, \href {https://doi.org/10.1103/PhysRevD.107.064041} {\path{doi:10.1103/PhysRevD.107.064041}}.

\bibitem{froebmuchpapadopoulos2023b}
M.~B. Fr{\"o}b, A.~Much, K.~Papadopoulos, {Non-commutative coordinates from quantum gravity}, PoS CORFU2022 (2023) 307.
\newblock \href {http://arxiv.org/abs/2303.17238} {\path{arXiv:2303.17238}}, \href {https://doi.org/10.22323/1.436.0307} {\path{doi:10.22323/1.436.0307}}.

\bibitem{froebmuchpapadopoulos2023c}
M.~B. Fr{\"o}b, W.~C.~C. Lima, A.~Much, K.~Papadopoulos, {Noncommutative geometry from perturbative quantum gravity in de Sitter spacetime}, Phys. Rev. D 108~(8) (2023) 086003.
\newblock \href {http://arxiv.org/abs/2305.01517} {\path{arXiv:2305.01517}}, \href {https://doi.org/10.1103/PhysRevD.108.086003} {\path{doi:10.1103/PhysRevD.108.086003}}.

\bibitem{ehlerspiranischild1972}
J.~Ehlers, F.~A.~E. Pirani, A.~Schild, {Republication of: The geometry of free fall and light propagation (1972)}, Gen. Rel. Grav. 44~(6) (2012) 1587--1609.
\newblock \href {https://doi.org/10.1007/s10714-012-1353-4} {\path{doi:10.1007/s10714-012-1353-4}}.

\bibitem{burgess2003}
C.~P. Burgess, {Quantum gravity in everyday life: General relativity as an effective field theory}, Living Rev. Rel. 7 (2004) 5--56.
\newblock \href {http://arxiv.org/abs/gr-qc/0311082} {\path{arXiv:gr-qc/0311082}}, \href {https://doi.org/10.12942/lrr-2004-5} {\path{doi:10.12942/lrr-2004-5}}.

\bibitem{giddingsmarolfhartle2006}
S.~B. Giddings, D.~Marolf, J.~B. Hartle, {Observables in effective gravity}, Phys. Rev. D 74 (2006) 064018.
\newblock \href {http://arxiv.org/abs/hep-th/0512200} {\path{arXiv:hep-th/0512200}}, \href {https://doi.org/10.1103/PhysRevD.74.064018} {\path{doi:10.1103/PhysRevD.74.064018}}.

\bibitem{gieselhofmannthiemannwinkler2010}
K.~Giesel, S.~Hofmann, T.~Thiemann, O.~Winkler, {Manifestly Gauge-invariant general relativistic perturbation theory. II. FRW background and first order}, Class. Quant. Grav. 27 (2010) 055006.
\newblock \href {http://arxiv.org/abs/0711.0117} {\path{arXiv:0711.0117}}, \href {https://doi.org/10.1088/0264-9381/27/5/055006} {\path{doi:10.1088/0264-9381/27/5/055006}}.

\bibitem{tambornino2012}
J.~Tambornino, {Relational Observables in Gravity: a Review}, SIGMA 8 (2012) 017.
\newblock \href {http://arxiv.org/abs/1109.0740} {\path{arXiv:1109.0740}}, \href {https://doi.org/10.3842/SIGMA.2012.017} {\path{doi:10.3842/SIGMA.2012.017}}.

\bibitem{gieselthiemann2015}
K.~Giesel, T.~Thiemann, {Scalar Material Reference Systems and Loop Quantum Gravity}, Class. Quant. Grav. 32 (2015) 135015.
\newblock \href {http://arxiv.org/abs/1206.3807} {\path{arXiv:1206.3807}}, \href {https://doi.org/10.1088/0264-9381/32/13/135015} {\path{doi:10.1088/0264-9381/32/13/135015}}.

\bibitem{khavkine2015}
I.~Khavkine, {Local and gauge invariant observables in gravity}, Class. Quant. Grav. 32~(18) (2015) 185019.
\newblock \href {http://arxiv.org/abs/1503.03754} {\path{arXiv:1503.03754}}, \href {https://doi.org/10.1088/0264-9381/32/18/185019} {\path{doi:10.1088/0264-9381/32/18/185019}}.

\bibitem{goellerhoehnkirklin2022}
C.~Goeller, P.~A. Höhn, J.~Kirklin, {Diffeomorphism-invariant observables and dynamical frames in gravity: reconciling bulk locality with general covariance} (6 2022).
\newblock \href {http://arxiv.org/abs/2206.01193} {\path{arXiv:2206.01193}}.

\bibitem{giddings2025}
S.~B. Giddings, {Quantum gravity observables: observation, algebras, and mathematical structure}, J. Phys. A 58~(41) (2025) 415401.
\newblock \href {http://arxiv.org/abs/2505.22708} {\path{arXiv:2505.22708}}, \href {https://doi.org/10.1088/1751-8121/ae0b12} {\path{doi:10.1088/1751-8121/ae0b12}}.

\bibitem{ferrerothiemann2025}
R.~Ferrero, T.~Thiemann, {Asymptotically safe canonical quantum gravity: Gaussian dust matter} (3 2025).
\newblock \href {http://arxiv.org/abs/2503.22474} {\path{arXiv:2503.22474}}.

\bibitem{brunettifredenhagenrejzner2016}
R.~Brunetti, K.~Fredenhagen, K.~Rejzner, {Quantum gravity from the point of view of locally covariant quantum field theory}, Commun. Math. Phys. 345~(3) (2016) 741--779.
\newblock \href {http://arxiv.org/abs/1306.1058} {\path{arXiv:1306.1058}}, \href {https://doi.org/10.1007/s00220-016-2676-x} {\path{doi:10.1007/s00220-016-2676-x}}.

\bibitem{froeblima2018}
M.~B. Fr{\"o}b, W.~C.~C. Lima, {Propagators for gauge-invariant observables in cosmology}, Class. Quant. Grav. 35~(9) (2018) 095010.
\newblock \href {http://arxiv.org/abs/1711.08470} {\path{arXiv:1711.08470}}, \href {https://doi.org/10.1088/1361-6382/aab427} {\path{doi:10.1088/1361-6382/aab427}}.

\bibitem{baldazzifallsferrero2022}
A.~Baldazzi, K.~Falls, R.~Ferrero, {Relational observables in asymptotically safe gravity}, Annals Phys. 440 (2022) 168822.
\newblock \href {http://arxiv.org/abs/2112.02118} {\path{arXiv:2112.02118}}, \href {https://doi.org/10.1016/j.aop.2022.168822} {\path{doi:10.1016/j.aop.2022.168822}}.

\bibitem{froeblima2023}
M.~B. Fr{\"o}b, W.~C.~C. Lima, {Synchronous coordinates and gauge-invariant observables in cosmological spacetimes}, Class. Quant. Grav. 40~(21) (2023) 215006.
\newblock \href {http://arxiv.org/abs/2303.16218} {\path{arXiv:2303.16218}}, \href {https://doi.org/10.1088/1361-6382/acf98a} {\path{doi:10.1088/1361-6382/acf98a}}.

\bibitem{schmuedgen2020}
K.~Schmüdgen, {An Invitation to Unbounded Representations of $\ast$-Algebras on Hilbert Space}, Vol. 285 of Graduate Texts in Mathematics, Springer-Verlag, Cham, Switzerland, 2020.
\newblock \href {https://doi.org/10.1007/978-3-030-46366-3} {\path{doi:10.1007/978-3-030-46366-3}}.

\bibitem{khavkinemoretti2015}
I.~Khavkine, V.~Moretti, {Algebraic QFT in Curved Spacetime and quasifree Hadamard states: an introduction}, in: R.~Brunetti, C.~Dappiaggi, K.~Fredenhagen, J.~Yngvason (Eds.), {Advances in algebraic quantum field theory}, Springer-Verlag, Cham, Switzerland, 2015, pp. 191--251.
\newblock \href {http://arxiv.org/abs/1412.5945} {\path{arXiv:1412.5945}}, \href {https://doi.org/10.1007/978-3-319-21353-8_5} {\path{doi:10.1007/978-3-319-21353-8_5}}.

\bibitem{DerezinskiMeissner2006}
J.~Derezinski, K.~A. Meissner, {Quantum massless field in 1+1 dimensions}, Lect. Notes Phys. 690 (2006) 107--127.
\newblock \href {http://arxiv.org/abs/math-ph/0408057} {\path{arXiv:math-ph/0408057}}, \href {https://doi.org/10.1007/3-540-34273-7_11} {\path{doi:10.1007/3-540-34273-7_11}}.

\bibitem{bognar1974}
J.~Bognár, {Indefinite Inner Product Spaces}, Vol.~78 of Ergebnisse der Mathematik und ihrer Grenzge\-bie\-te. 2. Folge, Springer-Verlag, Berlin, Heidelberg, Germany, 1974.
\newblock \href {https://doi.org/10.1007/978-3-642-65567-8} {\path{doi:10.1007/978-3-642-65567-8}}.

\bibitem{nakanishi1980}
N.~Nakanishi, {Free massless scalar field in two-dimensional space-time: Revisited}, Z. Phys. C 4 (1980) 17.
\newblock \href {https://doi.org/10.1007/BF01477303} {\path{doi:10.1007/BF01477303}}.

\bibitem{morchiopierottistrocchi1990}
G.~Morchio, D.~Pierotti, F.~Strocchi, {Infrared and vacuum structure in two‐dimensional local quantum field theory models. The massless scalar field}, J. Math. Phys. 31 (1990) 1467.
\newblock \href {https://doi.org/10.1063/1.528739} {\path{doi:10.1063/1.528739}}.

\bibitem{schubert2011}
S.~Schubert, \href{https://www.physik.uni-hamburg.de/th2/ag-fredenhagen/dokumente/dipl-schubert.pdf}{{{\"U}ber die Charakterisierung von Zust{\"a}nden hinsichtlich der Erwartungswerte quadratischer Operatoren}}, Master's thesis, Universität Hamburg, Hamburg, Germany (2011).
\newline\urlprefix\url{https://www.physik.uni-hamburg.de/th2/ag-fredenhagen/dokumente/dipl-schubert.pdf}

\bibitem{khavkine2012}
I.~Khavkine, {Quantum astrometric observables I: time delay in classical and quantum gravity}, Phys. Rev. D 85 (2012) 124014.
\newblock \href {http://arxiv.org/abs/1111.7127} {\path{arXiv:1111.7127}}, \href {https://doi.org/10.1103/PhysRevD.85.124014} {\path{doi:10.1103/PhysRevD.85.124014}}.

\bibitem{bongakhavkine2014}
B.~Bonga, I.~Khavkine, {Quantum astrometric observables II: time delay in linearized quantum gravity}, Phys. Rev. D 89~(2) (2014) 024039.
\newblock \href {http://arxiv.org/abs/1307.0256} {\path{arXiv:1307.0256}}, \href {https://doi.org/10.1103/PhysRevD.89.024039} {\path{doi:10.1103/PhysRevD.89.024039}}.

\bibitem{synge1960}
J.~L. Synge, {Relativity: The General Theory}, {North-Holland}, Amsterdam, The Netherlands, 1960.

\bibitem{padmanabhan1987}
T.~Padmanabhan, {Limitations on the operational definition of spacetime events and quantum gravity}, Class. Quant. Grav. 4 (1987) L107--L113.
\newblock \href {https://doi.org/10.1088/0264-9381/4/4/007} {\path{doi:10.1088/0264-9381/4/4/007}}.

\bibitem{padmanabhan1997}
T.~Padmanabhan, {Duality and Zero-Point Length of Spacetime}, Phys. Rev. Lett. 78 (1997) 1854--1857.
\newblock \href {http://arxiv.org/abs/hep-th/9608182} {\path{arXiv:hep-th/9608182}}, \href {https://doi.org/10.1103/PhysRevLett.78.1854} {\path{doi:10.1103/PhysRevLett.78.1854}}.

\bibitem{kothawala2013}
D.~Kothawala, {Minimal length and small scale structure of spacetime}, Phys. Rev. D 88~(10) (2013) 104029.
\newblock \href {http://arxiv.org/abs/1307.5618} {\path{arXiv:1307.5618}}, \href {https://doi.org/10.1103/PhysRevD.88.104029} {\path{doi:10.1103/PhysRevD.88.104029}}.

\bibitem{kothawala2023}
D.~Kothawala, {Relics of the quantum spacetime: from Synge’s world function as the fundamental probe of spacetime architecture to the emergent description of gravity}, J. Phys. Conf. Ser. 2533~(1) (2023) 012012.
\newblock \href {http://arxiv.org/abs/2304.01995} {\path{arXiv:2304.01995}}, \href {https://doi.org/10.1088/1742-6596/2533/1/012012} {\path{doi:10.1088/1742-6596/2533/1/012012}}.

\bibitem{nevesfarinacougopinto2010}
M.~J. Neves, C.~Farina, M.~V. Cougo-Pinto, {Past and Future Blurring at Fundamental Length Scale}, Phys. Rev. Lett. 105 (2010) 211601.
\newblock \href {http://arxiv.org/abs/1012.2393} {\path{arXiv:1012.2393}}, \href {https://doi.org/10.1103/PhysRevLett.105.211601} {\path{doi:10.1103/PhysRevLett.105.211601}}.

\bibitem{mercatisergola2018}
F.~Mercati, M.~Sergola, {Light cone in a quantum spacetime}, Phys. Lett. B 787 (2018) 105--110.
\newblock \href {http://arxiv.org/abs/1810.08134} {\path{arXiv:1810.08134}}, \href {https://doi.org/10.1016/j.physletb.2018.10.031} {\path{doi:10.1016/j.physletb.2018.10.031}}.

\bibitem{francowallet2022}
N.~Franco, J.-C. Wallet, {Quantum causality constraints on $\kappa$-Minkowski spacetime}, J. Phys. A 55~(41) (2022) 414007.
\newblock \href {http://arxiv.org/abs/2201.04057} {\path{arXiv:2201.04057}}, \href {https://doi.org/10.1088/1751-8121/ac94ab} {\path{doi:10.1088/1751-8121/ac94ab}}.

\bibitem{francohersentmariswallet2023}
N.~Franco, K.~Hersent, V.~Maris, J.-C. Wallet, {Quantum causality in $\kappa$-Minkowski and related constraints}, Class. Quant. Grav. 40~(16) (2023) 164001.
\newblock \href {http://arxiv.org/abs/2302.10734} {\path{arXiv:2302.10734}}, \href {https://doi.org/10.1088/1361-6382/ace588} {\path{doi:10.1088/1361-6382/ace588}}.

\bibitem{scardiglicasadio2014}
F.~Scardigli, R.~Casadio, {Gravitational tests of the generalized uncertainty principle}, Eur. Phys. J. C 75~(9) (2015) 425.
\newblock \href {http://arxiv.org/abs/1407.0113} {\path{arXiv:1407.0113}}, \href {https://doi.org/10.1140/epjc/s10052-015-3635-y} {\path{doi:10.1140/epjc/s10052-015-3635-y}}.

\bibitem{buoninfanteetal2022}
L.~Buoninfante, G.~G. Luciano, L.~Petruzziello, F.~Scardigli, {Bekenstein bound and uncertainty relations}, Phys. Lett. B 824 (2022) 136818.
\newblock \href {http://arxiv.org/abs/2009.12530} {\path{arXiv:2009.12530}}, \href {https://doi.org/10.1016/j.physletb.2021.136818} {\path{doi:10.1016/j.physletb.2021.136818}}.

\bibitem{lambiasescardigli2022}
G.~Lambiase, F.~Scardigli, {Generalized uncertainty principle and asymptotically safe gravity}, Phys. Rev. D 105~(12) (2022) 124054.
\newblock \href {http://arxiv.org/abs/2204.07416} {\path{arXiv:2204.07416}}, \href {https://doi.org/10.1103/PhysRevD.105.124054} {\path{doi:10.1103/PhysRevD.105.124054}}.

\bibitem{scardiglilambiase2022}
F.~Scardigli, G.~Lambiase, {Planck stars from a scale-dependent gravity theory}, Phys. Rev. D 107~(10) (2023) 104001.
\newblock \href {http://arxiv.org/abs/2205.07088} {\path{arXiv:2205.07088}}, \href {https://doi.org/10.1103/PhysRevD.107.104001} {\path{doi:10.1103/PhysRevD.107.104001}}.

\bibitem{casadiofengkuntzscardigli2023}
R.~Casadio, W.~Feng, I.~Kuntz, F.~Scardigli, {Minimum length (scale) in quantum field theory, generalized uncertainty principle and the non-renormalisability of gravity}, Phys. Lett. B 838 (2023) 137722.
\newblock \href {http://arxiv.org/abs/2210.12801} {\path{arXiv:2210.12801}}, \href {https://doi.org/10.1016/j.physletb.2023.137722} {\path{doi:10.1016/j.physletb.2023.137722}}.

\bibitem{papadopoulosscardigli2018}
K.~Papadopoulos, F.~Scardigli, {Spacetimes as topological spaces, and the need to take methods of general topology more seriously.}, in: {Current Trends in Mathematical Analysis and Its Interdisciplinary Applications}, Birkhäuser, Cham, 2019, p. 185–196.
\newblock \href {http://arxiv.org/abs/1804.05419} {\path{arXiv:1804.05419}}, \href {https://doi.org/10.1007/978-3-030-15242-0_6} {\path{doi:10.1007/978-3-030-15242-0_6}}.

\bibitem{papadopoulos2021}
K.~Papadopoulos, {Natural vs. Artificial Topologies on a Relativistic Spacetime}, in: {Nonlinear Analysis and Global Optimization}, Springer International Publishing, Cham, 2021, p. 389–401.
\newblock \href {http://arxiv.org/abs/2107.06646} {\path{arXiv:2107.06646}}, \href {https://doi.org/10.1007/978-3-030-61732-5_18} {\path{doi:10.1007/978-3-030-61732-5_18}}.

\bibitem{finstermuchpapadopoulos2021}
F.~Finster, A.~Much, K.~Papadopoulos, {On Global Hyperbolicity of Spacetimes: Some Recent Advances and Open Problems}, in: {Mathematical Analysis in Interdisciplinary Research}, Springer International Publishing, Cham, 2021, pp. 281--295.
\newblock \href {http://arxiv.org/abs/2107.07156} {\path{arXiv:2107.07156}}, \href {https://doi.org/10.1007/978-3-030-84721-0_15} {\path{doi:10.1007/978-3-030-84721-0_15}}.

\bibitem{penrose1965}
R.~Penrose, {Gravitational collapse and space-time singularities}, Phys. Rev. Lett. 14 (1965) 57--59.
\newblock \href {https://doi.org/10.1103/PhysRevLett.14.57} {\path{doi:10.1103/PhysRevLett.14.57}}.

\bibitem{zeeman1967}
E.~C. Zeeman, {The topology of Minkowski space}, Topology 6~(2) (1967) 161--170.
\newblock \href {https://doi.org/10.1016/0040-9383(67)90033-X} {\path{doi:10.1016/0040-9383(67)90033-X}}.

\bibitem{goebel1976}
R.~Göbel, {Zeeman topologies on space-times of general relativity theory}, Commun. Math. Phys. 46 (1976) 289--307.
\newblock \href {https://doi.org/10.1007/BF01609125} {\path{doi:10.1007/BF01609125}}.

\bibitem{hawkingkingmccarthy1976}
S.~W. Hawking, A.~R. King, P.~J. {McCarthy}, {A new topology for curved space–time which incorporates the causal, differential, and conformal structures}, J. Math. Phys. 17 (1976) 174--181.
\newblock \href {https://doi.org/10.1063/1.522874} {\path{doi:10.1063/1.522874}}.

\bibitem{tipler1978}
F.~J. Tipler, {Energy conditions and spacetime singularities}, Phys. Rev. D 17 (1978) 2521--2528.
\newblock \href {https://doi.org/10.1103/PhysRevD.17.2521} {\path{doi:10.1103/PhysRevD.17.2521}}.

\bibitem{senovilla1998}
J.~M.~M. Senovilla, {Singularity Theorems and Their Consequences}, Gen. Rel. Grav. 30 (1998) 701.
\newblock \href {http://arxiv.org/abs/1801.04912} {\path{arXiv:1801.04912}}, \href {https://doi.org/10.1023/A:1018801101244} {\path{doi:10.1023/A:1018801101244}}.

\bibitem{senovillagarfinkle2015}
J.~M.~M. Senovilla, D.~Garfinkle, {The 1965 Penrose singularity theorem}, Class. Quant. Grav. 32~(12) (2015) 124008.
\newblock \href {http://arxiv.org/abs/1410.5226} {\path{arXiv:1410.5226}}, \href {https://doi.org/10.1088/0264-9381/32/12/124008} {\path{doi:10.1088/0264-9381/32/12/124008}}.

\bibitem{freivogelkontoukrommydas2022}
B.~Freivogel, E.-A. Kontou, D.~Krommydas, {The Return of the Singularities: Applications of the Smeared Null Energy Condition}, SciPost Phys. 13~(1) (2022) 001.
\newblock \href {http://arxiv.org/abs/2012.11569} {\path{arXiv:2012.11569}}, \href {https://doi.org/10.21468/SciPostPhys.13.1.001} {\path{doi:10.21468/SciPostPhys.13.1.001}}.

\bibitem{fewsterkontou2022}
C.~J. Fewster, E.-A. Kontou, {A semiclassical singularity theorem}, Class. Quant. Grav. 39~(7) (2022) 075028.
\newblock \href {http://arxiv.org/abs/2108.12668} {\path{arXiv:2108.12668}}, \href {https://doi.org/10.1088/1361-6382/ac566b} {\path{doi:10.1088/1361-6382/ac566b}}.

\bibitem{grafetal2025}
M.~Graf, E.-A. Kontou, A.~Ohanyan, B.~Schinnerl, {Hawking-Type Singularity Theorems for Worldvolume Energy Inequalities}, Ann. H. Poincaré 26~(11) (2025) 3871--3906.
\newblock \href {http://arxiv.org/abs/2209.04347} {\path{arXiv:2209.04347}}, \href {https://doi.org/10.1007/s00023-024-01502-6} {\path{doi:10.1007/s00023-024-01502-6}}.

\bibitem{bykovminguzzi2025}
A.~Bykov, E.~Minguzzi, {Global hyperbolicity and manifold topology from the Lorentzian distance} (3 2025).
\newblock \href {http://arxiv.org/abs/2503.04382} {\path{arXiv:2503.04382}}.

\bibitem{bodenhofer2007}
U.~Bodenhofer, B.~{De Baets}, J.~Fodor, {A compendium of fuzzy weak orders: Representations and constructions}, Fuzzy Sets and Systems 158~(8) (2007) 811--829.
\newblock \href {https://doi.org/10.1016/j.fss.2006.10.005} {\path{doi:10.1016/j.fss.2006.10.005}}.

\bibitem{jinming2007}
F.~Jinming, {$I$-fuzzy Alexandrov topologies and specialization orders}, Fuzzy Sets and Systems 158~(21) (2007) 2359--2374.
\newblock \href {https://doi.org/10.1016/j.fss.2007.05.001} {\path{doi:10.1016/j.fss.2007.05.001}}.

\bibitem{minguzzisuhr2024}
E.~Minguzzi, S.~Suhr, {Lorentzian metric spaces and their Gromov--Hausdorff convergence}, Lett. Math. Phys. 114~(4) (2024) 90.
\newblock \href {http://arxiv.org/abs/2209.14384} {\path{arXiv:2209.14384}}, \href {https://doi.org/10.1007/s11005-024-01837-5} {\path{doi:10.1007/s11005-024-01837-5}}.

\bibitem{bykovminguzzisuhr2025}
A.~Bykov, E.~Minguzzi, S.~Suhr, {Lorentzian metric spaces and GH-convergence: the unbounded case}, Lett. Math. Phys. 115~(3) (2025) 63.
\newblock \href {http://arxiv.org/abs/2412.04311} {\path{arXiv:2412.04311}}, \href {https://doi.org/10.1007/s11005-025-01941-0} {\path{doi:10.1007/s11005-025-01941-0}}.

\bibitem{minguzzi2025}
E.~Minguzzi, {Results on Lorentzian metric spaces} (10 2025).
\newblock \href {http://arxiv.org/abs/2510.24423} {\path{arXiv:2510.24423}}.

\end{thebibliography}
 
\end{document}